\newtheorem{theorem}{Theorem}
\newtheorem{lemma}{Lemma}
\newtheorem{proposition}{Proposition}
\DeclareMathOperator*{\argmax}{argmax}
\newcommand{\mJ}{{\mathcal J}}
\newcommand{\mN}{{\mathcal N}}
\newcommand{\mR}{{\mathcal R}}
\newcommand{\mS}{{\mathcal S}}
\newcommand{\mX}{{\mathcal X}}
\newcommand{\bN}{{\mathbb N}}
\newcommand{\bR}{{\mathbb R}}
\newcommand{\bZ}{{\mathbb Z}}
\newcommand{\bE}{{\mathbb E}}
\newcommand{\T}{\textsf{T}}
\newcommand{\coS}{<\!{\mathcal S}\! >}
\newcommand{\coqS}[1]{<\!{\mathcal S}\wedge Q(#1)\! >}
\begin{document}
%
%
%
%
 \title{Concave Switching in Single and Multihop Networks}
\author[1]{Neil Walton}
\affil[1]{University of Amsterdam, n.s.walton@uva.nl}
\date{}
%
%


\maketitle

\abstract{Switched queueing networks model wireless networks, input queued switches and numerous other networked communications systems. For single-hop networks, we consider a {($\alpha,g$)-switch policy} which combines the MaxWeight policies with bandwidth sharing networks -- a further well studied model of Internet congestion. We prove the maximum stability property for this class of randomized policies. Thus these policies have the same first order behavior as the MaxWeight policies. However, for multihop networks some of these generalized polices address a number of critical weakness of the MaxWeight/BackPressure policies.

For multihop networks with fixed routing, we consider the Proportional Scheduler (or (1,log)-policy). In this setting, the BackPressure policy is maximum stable, but must maintain a queue for every route-destination, which typically grows rapidly with a network's size. However, this proportionally fair policy only needs to maintain a queue for each outgoing link, which is typically bounded in number.  As is common with Internet routing, by maintaining per-link queueing each node only needs to know the next hop for each packet and not its entire route.  Further, in contrast to BackPressure, the Proportional Scheduler does not compare downstream queue lengths to determine weights, only local link information is required. This leads to greater potential for decomposed implementations of the policy.
Through a reduction argument and an entropy argument, we demonstrate that, whilst maintaining substantially less queueing overhead, the Proportional Scheduler achieves maximum throughput stability.
}

\section{Introduction}\label{sec1}
Switch networks are queueing networks were there are constraints on which queues can be served simultaneously. At each time a policy must chose a schedule that satisfies these constraints.
The MaxWeight/BackPressure scheduling policies were first introduced by Tassiulas and Ephremides  as a model of wireless communication \cite{TaEp92}. Their policy was applicable to the class of switched queueing networks both for single-hop networks -- where packets are served once before departing  -- and for multi-hop networks -- where packets are served a multiple queues before departing. For single-hop networks, their policy is called \emph{MaxWeight} and, for multihop networks, it is called \emph{BackPressure}.  The MaxWeight and BackPressure scheduling policies have the key properties of having a maximal stability region while not requiring explicit estimation of arrival rates.
Subsequently, as a model of Internet Protocol routers, McKeown et al. \cite{MMAW99} applied this paradigm to the example of input-queued switches. Extensions of these policies can be found in Stolyar \cite{stolyar2004maxweight}. The MaxWeight and BackPressure policies have proved popular and, accordingly, have been generalized \cite{AKRSVW04,St04,ESP05,ShWi06,ShWi11,JMMT11,ShWi12}.
The defining feature of each generalization of MaxWeight and BackPressure is that it minimizes the drift of a Lyapunov function, which is then
used to prove the policies stability.

A further class of maximum stable Internet models are bandwidth sharing networks, first introduced by Roberts and Massouli\'e \cite{MaRo99}. Similar to MaxWeight and BackPressure, these policies are defined by an optimization and are maximum stable; however, unlike BackPressure, they are not constructed from a Lyapunov function. Stability proofs for these systems can be found in Bonald and Massouli\'e \cite{BoMa01} and Massouli\'e \cite{Ma07}.  Further progress on the large deviations and heavy traffic behaviour of proportional fairness can be found in Jonckheere, Lopez \cite{JoLo13} and Vlasiou, Zhang, Zwart \cite{VZZ14}. To the knowledge of this author, discussions applying these models to switch networks are first made by Shah and Wischik \cite{ShWi11} and Zhong \cite{Zh12} for the weighted $\alpha$-fair bandwidth allocation policies of Mo and Walrand \cite{MoWa00}.

This paper merges the MaxWeight and bandwidth allocation policies, proves their stability for single-hop switch networks and, for a proportionally fair case, we extend these results to multihop networks with fixed routing. In essence the paper leverages and generalizes results to switch networks that originate from bandwidth networks \cite{BoMa01,Ma07}.
Because of their substantially reduced queueing complexity, there are a number of significant structural advantages which have not previously been observed in switch systems.

\subsection{Policies and Results}
We now briefly describe the switch policies considered in this paper. A more formal description is given in Section \ref{sec3}.
In this paper, we define a switch policy as follows: given parameter $\alpha>0$ and concave functions $g=(g_j$: $j\in\mJ)$, the ($\alpha,g$)-switch policy chooses a random schedule which solves
\begin{align*}
&\text{maximize}\qquad\sum_{j\in\mJ} g_j(s_j)Q_j^\alpha\quad\text{over}\quad s\in\coS,
\end{align*}
where $Q=(Q_j:j\in\mJ)$ is the vector of queue sizes and where, allowing for randomization, the maximization is taken over a set of admissible schedules $\coS$. The interior of the set $\coS$ also gives the network\rq{}s \emph{stability region} -- the largest set of arrival rates for which a policy can stabilize the network. A policy which is stable for all arrival rates in $\coS$ is maximum stable. In order to place these policies on a par with the traditional MaxWeight policies, we show that these policies have the same maximum stability property as the MaxWeight. In Theorem \ref{MainThrm}, we provide a proof which is analogous to the proof of Bonald and Massouli\'e for weight $\alpha$-fair bandwidth policies \cite{BoMa01}. This is the first contribution of the paper.

Next we consider multihop networks with fixed routing. Here the BackPressure is the canonical maximal stable policy. The BackPressure policy essentially consists of two stages: the first stage, weights are determined for each link $j$ by comparing current and downstream stream queues sizes for each commodity/route and, a second stage, were we optimize these weights over the set of schedules.
Notice, for BackPressure to make a scheduling decision it must know the routes of all packets present at each queue. This is not practical when the routes processed a node are large in number and potentially unknown.
In contrast, we consider a proportionally fair optimization. This consists of solving an optimization of the form
\begin{align*}
&\text{maximize}\qquad\sum_{j\in\mJ} Q_j \log s_j\quad\text{over}\quad s\in\coS.
\end{align*}
Here $j=(n,n\rq{})$ represents a directed link between two nodes of a communication network and $Q_j$ is the number of packets present at $n$ which currently waiting to pass through link $j$. A schedule is then chosen whose mean is the solution to this optimization and a job is chosen at random from each scheduled queue.

We consider the fluid model for this fixed route multihop network.
We show that the proportional fair fluid model is maximum stable. We give two proofs of this result. The first exploits the structure of our random service discipline and use the properties of the logarithm to reduce our multi-class queueing system with fixed routing to a single-class system with Markovian routing. From this reduction, we can apply the fluid analysis of Massouli\'e \cite{Ma07} to prove fluid stability. We provide a second and direct proof that emphasized the policies implicit ability to balance the entropy between separate networked components. The fluid stability argument given is compared with the proof of Bramson \cite{Br96} for head-of-the-line processor sharing networks. The importance of these stability arguments, which originate from bandwidth networks and classical queueing networks, is in their consequences for switched systems.

\subsection{Contributions}

Let us discuss in more detail a few of the advantages of this Proportional Scheduler for multihop switched networks. Firstly, as we consider increasingly large communication networks, we expect the degree of nodes (routers) to stay bounded but the number of routes or destinations processed by a router to grow substantially. Thus the Proportional Scheduler, which maintains a queue per out-going link, requires a number of queues that is far far smaller than BackPressure, which maintains a queue for each route-destination.
Second, to make a scheduling decision with the BackPressure policy, we need to know the entire route or at least the destination in of each packet, while for the Proportional Scheduler knowing the next hop of the packet is sufficient to implement the policy.
Third, if we add new nodes to our network then, to implement BackPressure, the newly added node must be aware of the entire topology and route/class structure of the network, whilst the Proportional Scheduler only requires information about neighboring links. So, routing decisions can be determined by packets.
 Fourth, when different schedules within the network can be implemented independently -- here we imagine separate non-interfering network components --, for BackPressure, messages must be shared between the separate components in order to determine the differences in queue sizes. However, for the Proportional Scheduler, the optimization relevant to the policy will decompose into an independent optimization for each component which can then be solved separately.
Finally, as a general rule, there are almost no single-hop communication systems. Communication systems consist of a large number of interconnected components. Single-hop switch networks are often considered in performance analysis for reasons of tractability. This statement is particularly true for wireless systems. However, stability in a single-hop system does not in general imply stability in a corresponding multihop system. See Lu and Kumar \cite{LuKu91}, Rybko and Stolar \cite{RySt92} and Bramson \cite{Br08} for instability examples which can easily be adapted to switch systems. Notable, recent work of Dieker and Shin \cite{dieker2013local} considers schemes which extend local/single-hop stability results to global stability results. These allow for multi-class routing. However, like BackPressure, these policies require route information from each packet to make scheduling decisions. To the best of knowledge of this author, this is the first general proof of maximum stability for a multihop switch network where routing information is not required in order to make a scheduling decision.

We now briefly discuss extensions and future consequences of this analysis. The Proportional Scheduler which we consider applies a processor-sharing discipline within each queue, meaning that the scheduled packet from a queue is chosen at random. However, it would be desirable to consider first-in-first-out queueing. In this way, packets will be served by the network in sequence and jitter effects will be reduced. Further, in practice, communication implement FIFO queueing. Surprisingly, the Proportional Scheduler is provably a maximum stable for FIFO queueing disciplines.  The proof of this result is substantially more involved than the short proof that we can provide for the processor-sharing case. The FIFO case follows due to underlying connections between proportional fairness and quasi-reversible queueing systems -- see \cite{Ma07,ShWaZh12} for some related discussion. This analysis leads to further structural advantages namely product-form resource-pooling effects and better delay scaling complexity compared to BackPressure. We refer the interested reader to the preprint \cite{BDW14}.

In a summary, the contributions of this work are as follows:

\begin{itemize}
\item We combine the set of MaxWeight policies and bandwidth allocation policies, to provide a new class of switch policies and we prove maximum stability for these polices.
\item Using a random discipline within queues, we consider a Proportional Scheduler for multihop networks with fixed routing. We give two proofs of fluid stability for this system.
\item We emphasis several important observations of this last result for switch networks:
\renewcommand{\labelitemii}{$\circ$}
\begin{itemize}
\item the policy only needs to maintain a queue for each outgoing link;
\item packets can be routed by only knowing each packets next hop;
\item network nodes do not need to know their network topology;
\item message do not need to be sent to calculate weights and so scheduling decisions can be completely decomposed between independently functioning components.
\end{itemize}
\item We provide  the first general proof of maximum stability for a multihop switch network where routing information is not required in order to make a scheduling decision.
\end{itemize}
Informally speaking, because the Proportional Scheduler does not need to know the route structure of the network to make a scheduling decision, while BackPressure does, the policy\rq{}s implementation scales better with network size.

\subsection{Organization}
The remaining sections of the paper are organized as follows. In Section \ref{sec2}, we describe the network and queueing process for both single-hop and multihop networks. In Section \ref{sec3}, we define the $(\alpha,g)$-switch policies, proportional fairness and we relate these to existing policies, namely, MaxWeight, BackPressure, utility optimizing bandwidth allocation policies, and weight $\alpha$-fairness. We also define the fluid model associated with each of these systems. In Section \ref{sec4}, we present and discuss the main results of this paper, Theorem \ref{MainThrm} and Theorem \ref{MainThrm2}. In Section \ref{sec5}, we provide a proof of Theorem \ref{MainThrm}. This consists of characterizing the fluid limit of the $(\alpha,g)$-policy, proving stability of that fluid system and then using this to prove positive recurrence of the prelimit process. In Section \ref{sec6}, we give two proofs of Theorem \ref{MainThrm2}. The first involves a reduction to a proof of Massouli\'e \cite{Ma07}. The second proof can be argued from the fluid analysis of Bramson \cite{Br96}.

\section{Switch Network Notation}\label{sec2} 
We define single-hop and multihop switch networks. These discrete-time queueing networks have restrictions on which queues can be served simultaneously.

The following notation is used both for single-hop and multihop networks.
We assume that time is slotted, that is each time, $t$, belongs to the positive integers, $\bZ_+$.
We let the finite set $\mJ$ index the set of queues or links of a network.
We let the finite set $\mS$ be the set of schedules.
Each schedule $\sigma=(\sigma_j:j\in\mJ)\in\mS$ is a vector in $\bZ_+^\mJ$ where $\sigma_j$ gives the number of jobs that will be served from queue $j$ under schedule $\sigma$. We assume the zero vector, $\textbf{0}$, belongs to $\mS$. We let $\sigma_{\max}$ give the maximum value of $\sigma_j$ for $\sigma\in\mS$.
We let $\coS$ be the convex combination of schedules in $\mS$.
We assume $\coS$ has a non-empty interior.
For real numbers $q,s\in\bR$, we define $q\wedge s=\min\{q,s\}$. For vectors $q,s\in\bR^\mJ$, we define $q\wedge s= (q_j\wedge s_j: j\in\mJ)$.
For a vector of queue sizes $Q\in\bZ_+^\mJ$, we define
\begin{align*}
\mS_Q = \{ \sigma \wedge Q : \sigma \in \mS \}.
\end{align*}
Given that not all jobs can be served from a queue $j$ when $Q_j < \sigma_j$, the set $\mS_Q$ gives the set of schedules available given the vector of queue sizes $Q$. We let $<\!\mS_Q\! >$ be the convex combination of points in $\mS_Q$.

\subsection{Single-hop Switched Network}

For a \emph{single-hop network}, once a packet has been served at its queue it departs the network.

We let $a(t)=(a_j(t):j\in\mJ)\in\bZ_+^\mJ$ be the number of arrivals occurring at each queue
at time $t\in\bZ_+$. We assume $\{ a(t) \}_{t=0}^\infty$ is a sequence of independent identically distributed random vectors
with finite mean $\bar{a}\in(0,\infty)^\mJ$ and finite second moment $\bE\big[ a_j(t)^2\big] \leq K$ for $j\in\mJ$.\footnote{The assumption of finite variance is not essential; however, it allows for bounds more convenient for our proofs.}

We let $Q(0)=(Q_j(0):j\in\mJ)$ be the number of jobs in each queue at time $t=0$.
From a sequence of schedules $\{ \sigma(t) \}_{t=1}^\infty$, we can define the queue size vector $Q(t)=(Q_j(t):j\in\mJ)$ by
\begin{equation}
Q_j(t+1)= Q_j(t) - \sigma_j(t+1)  + a_j(t+1),
\end{equation}
for $j\in\mJ$, and $t\in \bN$. For positive queue sizes, we require
that $\sigma_j(t+1)\in \mS_{Q(t)},$
for all $t\in\bZ_+$.\footnote{Observe, this choice of notation is equivalent to defining the queueing process by $Q_j(t+1)= [ Q_j(t) - \sigma_j(t+1)]_+  + a_j(t+1)$.}  It is well known that queue sizes cannot be stabilized when the arrival rates $\bar{a}$ lie outside the set $< \! \mS \! >$, for instance, see  \cite{TaEp92}. For this reason, we give the following definition.

Given $\{Q(t)\}_{t=0}^\infty$ defines a Markov chain, we say the queue size process is \emph{maximum stable} if it is positive recurrent whenever the vector of arrival rates, $\bar{a}$, belongs to the interior of $\coS$.

\subsection{Multihop Switched Network}
In a multihop network we allow packets to visit multiple queues within the network before departing. We will develop our  notation in a similar way to Tassiulas and Ephremides \cite{TaEp92}.

As in the previous section, we let the finite set $\mJ$ index the set of queues or links of a network.
We let $\mN$ define the set of nodes of a network.
We consider each link $j\in\mJ$ to be a directed edge in this network $j=(n,n\rq{}) \in \mN\times \mN$.
We let $\mR$ be a set of routes through the network.
Here each $r=(n_1^r,...,n_{k_r}^r)\in\mR$ is an ordered set of nodes such that $j_k^r:=(n^r_k,n^r_{k+1})\in \mJ$ for $k=1,...,k_r-1$. A route $r$ packet served at node $n_k$ must next visit node $n_{k+1}$ using link $j^r_k$. We apply the notation $j\in r$ if link $j$ is part of route $r$. Further, for each $j\in r$, we let $j^r_-$ and $j^r_+$ denote the previous (upstream) and the next (downstream) link on route $r$.

We let $a(t)=(a_{jr}(t):j\in\mJ, r\in\mR)\in\bZ_+^{\mJ\times\mR}$ be the number of external arrivals occurring at the ingress $j=j^r_1$ of each route $r$
at each time $t\in\bZ_+$. Note $a_{jr}(t)=0$ unless the queue is the first on its route, $j=j^r_1$. We assume $\{ a(t) \}_{t=0}^\infty$ is a sequence of independent identically distributed random vectors
with finite mean $\bar{a}=(\bar{a}_{jr}:r\in\mR, j\in\mJ)\in\bR_+^{\mJ\times\mR}$ and finite second moment $\bE\big[ a_{jr}(t)^2\big] \leq K$ for all $r\in\mR$ $j\in\mJ$. We define the load on queue $j\in\mJ$ by
\begin{equation}
\bar{a}_j = \sum_{r: j\in r} \bar{a}_{jr}.
\end{equation}
Also we let $\bar{a}_r:=\bar{a}_{j_1^r r}$  be the arrival rate of route $r$.

We let $Q_j(t)$ denote the number of packets who must next be served along link $j$. We let $X_{jr}(t)$ be the number of route $r$ packets who will next be served over link $j$ at time $t$. Thus, it holds that
\begin{equation}
Q_j(t)=\sum_{r: j\in r} X_{jr}(t).
\end{equation}

We must describe the queueing dynamics for the queue size process $X(t)=(X_{jr}(t):j\in\mJ, r\in\mR)$. In particular --  in constrast to single-hop networks which are single-class -- because we are considering a multi-class queueing network, once the links have been scheduled we must also decide which routes
will be served. For the BackPressure policy, this would involve comparing the queue size for each route at the queue and at the next downstream node. Shortly, we will explain how this decision is made as part of a Proportional Scheduler. 
In any case, given a schedule $\sigma(t)\in\mS_{Q(t)}$ is chosen at time $t$, we let $\xi(t)=(\xi_{jr}(t):  j\in\mJ, r\in\mR)$ denote the number of route $r$ packets served from queue $j$ under schedule $\sigma(t)$. For $\xi(t)$, the following constraints must hold
\begin{align*}
\sum_{r: j\in r}  \xi_{jr} &(t) = \sigma_j(t),  \qquad
\xi_{jr}(t) \leq X_{jr}(t).
\end{align*}
That is the number of jobs served from each class is the number scheduled and we cannot schedule more jobs from each class than there are at the queue.
Given this, the queueing dynamics of a multihop switch network are as follows
\begin{equation}
X_{jr}(t+1) = X_{jr}(t) + a_{jr}(t+1) - \xi_{jr}(t+1) + \xi_{j\rq{}r}(t+1)
\end{equation}
where for each $r\in\mR$ and each link $j\in r$ its upstream link $j\rq{}$.\footnote{If no upsteam link exists for $j$ then we set $\xi_{j\rq{}r}(t+1)=0$.} Notice that once a packet has been served at its queue it joins the next queue on its route. We call this a \emph{multihop network with fixed routing}. It is well known that queue sizes cannot be stabilized when the arrival rates $\bar{a}$ lie outside the set $< \! \mS \! >$, for instance, see  \cite{TaEp92}. For this reason, we give the following definition.

Given the process $\{X(t)\}_{t=0}^\infty$ is a Markov chain, we say the queue size process is \emph{maximum stable} if it is positive recurrent whenever the vector of arrival rates, $(\bar{a}_j: j\in\mJ)$, belongs to the interior of $\coS$.

\section{Switch Policies}\label{sec3}
We now define the main switching policies considered in this paper. For single-hop networks, we introduce the ($\alpha,g$)-switch policies and, for multihop networks, we consider the Proportional Scheduler.

\subsection{The ($\boldsymbol\alpha$, g)-Switch Policy}
We let $\alpha$ be a positive real number and, for each $j\in\mJ$, we let $g_j:\bR_+\rightarrow \bR$  be a strictly increasing, differentiable, strictly concave function. Given a queue size vector $Q(t)=(Q_j(t):j\in\mJ)$ at time $t$, we define $\bar{\sigma}(t+1)=(\bar{\sigma}_j(t+1):j\in\mJ)$ to be a solution to the optimization
\begin{subequations}\label{MW-af}
\begin{align}
&\text{maximize} &&\sum_{j\in\mJ} g_j(s_j)Q_j(t)^\alpha\label{MW-af-1}\\
&\text{over} && s\in\coqS{t}.\label{MW-af-3}
\end{align}
\end{subequations}


In general, $\bar{\sigma}(t+1)$ need not belong to the set of schedules $\mS_{Q(t)}$. However, $\bar{\sigma}(t)$ is a convex combination of points in $\coqS{t}$. Thus we let $\sigma(t+1)$ be a random variable with support on $\mS_{Q(t)}$ and mean $\bar{\sigma}(t+1)$. The ($\alpha,g$)-switch policy scheduling policy is the policy that chooses schedule $\sigma(t)$ at each time $t\in\bN$.

To be concrete, we assume the random variables $\sigma(t)$ are, respectively, a function of $Q(t-1)$ and an independent (uniform) random variable. 
This ensures that
the queue size process $\{ Q(t)\}_{t=0}^\infty$ associated with the ($\alpha,g$)-switch policy scheduling policy is a discrete-time Markov chain. Further, the constraints \eqref{MW-af-3} ensure that a schedule never exceeds the queue it serves.

\subsubsection*{MaxWeight, Utility optimization and $\alpha$-fairness}
We now briefly compare the class of policies above with the MaxWeight and network utility optimizing allocations.
A MaxWeight-$\alpha$ allocation policy would be the special case of the optimization problem \eqref{MW-af} when $g_j(s_j)=s_j$,
\begin{align}
\text{maximize} \quad \sum_{j\in\mJ} s_jQ_j(t)^\alpha\quad \text{over}\quad  s\in\mS_{Q(t)}.
\end{align}
A network utility maximizing allocation in the sense of \cite{Ke97} would correspond to an optimization of the form
\begin{equation}
\text{maximize} \quad \sum_{j\in\mJ} g_j\left( \tfrac{s_j}{Q_j(t)} \right) Q_j(t)\quad \text{over}\quad  s\in\coS.
\end{equation}
Notice in this policy, terms ${s_j}/{Q_j(t)}$ are applied to concave $g_j$, whilst the $(\alpha,g)$-policy just applies $s_j$. The ${s_j}/{Q_j(t)}$  term is included for bandwidth network because it applies processor-sharing system to each transfer. However, since a switch-network applies its discipline to each component of the switch, we do not divide by $Q_j$. A further technical point is that, when compared with the $(\alpha,g)$-policies, it is significantly harder to prove the stability of the above bandwidth network.

A further interesting case is the $\alpha$-fair family of bandwidth allocation policies, as first introduced by Mo and Walrand \cite{MoWa00}. For $\alpha>0$ and $\alpha\neq 1$, these are defined as follows
\begin{align*}
&\text{maximize} \quad\sum_{j\in\mJ}\frac{ s_j^{1-\alpha}}{1-\alpha}Q_j(t)^{\alpha}\quad \text{over}\quad  s\in\coS.
\end{align*}
We remark that the ($\alpha,g$)-switch policy coincides with the MaxWeight-$\alpha$ policy when $g_j(s_j)=s^{1-\alpha}_j/(1-\alpha)$. For switch-networks, the $\alpha$-fair policy is first considered by Shah and Wischik \cite{ShWi11}.

\subsubsection*{Fluid Model}
The fluid model associated with the $(\alpha,g)$-switch policy is a positive, absolutely continuous process $q(t)= (q_j(t): j\in\mJ)$, $t\in\bR_+$, where, for $j\in\mJ$ and for almost every\footnote{By \emph{almost every}, we mean on all points except for a set of Lebesgue measure zero.} time $t\in\bR_+$,  if $q_j(t)>0$ then
\begin{align}\label{FluidEqns}
\frac{d q_j}{dt} & = \bar{a}_j - {\sigma}^*_j(q(t)).
\end{align}
Here ${\sigma}^*(q)$ solves the optimization

\begin{align}
&\text{maximize}\qquad\sum_{j\in\mJ} g_j(s_j)q_j^\alpha\qquad \text{over}\qquad s\in\coS.\label{MW-sq}
\end{align}
Note that we optimize over $\coS$ rather than $<\! \mS_q \!>$.

We say that a fluid model is \emph{stable} if there exists a time $T>0$ such that,
for every $\{q(t)\}_{t\in\bR_+}$ satisfying \eqref{FluidEqns} and with $||q(0)||_1=1$,
\begin{align}
&\qquad\qquad\qquad q_j(t)=0, && j\in\mJ,
\end{align}
{ for all } $t\geq T$. Here and hereafter, $||\cdot||_1$ is the $L_1$-norm.


\subsection{Proportional Scheduler}\label{MH-PF-sec}
We now describe a scheduling policy for multihop networks.
Given a vector of link queue sizes $(Q_j(t): j\in\mJ)$, the Proportional Scheduler for multihop networks is defined as follows

\begin{description}\item[PS1.]  Over set of schedules $ <\! \mS_{Q(t)}\! >$ solve the optimization
\begin{subequations}\label{MH-PF}
\begin{align}
&\text{maximize}&& \sum_{j\in\mJ} Q_j(t) \log (s_j) \\
&\text{over}&& s\in <\! \mS_{Q(t)}\! >.
\end{align}
\end{subequations}
As previously, let $\sigma(t+1)$ be a random variable on $\mS_{Q(t)}$ whose mean solves this optimization.
\item[PS2.] From each queue $j\in\mJ$, serve $\sigma_j(t+1)$ packets selected uniformly at random from $Q_j(t)$ packets at the queue. These then join their next downstream node as determine by their route class.
\end{description}
Notice we do not need to know the routes used by packets in order to define the above policy. We assume that, after service at a queue, packets go to the next hop on their respective routes. The optimization \eqref{MH-PF} is due to Kelly \cite{Ke97} and is called the proportional fair optimization. We refer to the policy as the \emph{Proportional Scheduler}.\footnote{Since fairness is not, as such, an objective in our scheduling discipline, we omit the use of the word fair.}

\subsubsection*{The BackPressure Policy}
We now briefly compare the Proportional Scheduler with the BackPressure policy.
For multihop networks, the principle maximal stable policy has been the BackPressure policy.

Given the vector of queue sizes $(X_{jr}(t): j\in\mJ, r\in\mR)$, this policy is defined by the following three steps
\begin{description}\item[BP1.] For each link $j=(n,n')\in\mJ$ directed out of each node $n\in\mN$, calculate weights by comparing with upstream queue lengths,
\begin{equation}\label{BPwOpt}
w_j(X(t))= \max_{r: j\in r} \left\{   X_{jr}(t) - X_{j_+^rr}(t) , 0 \right\},\footnote{Here if there is no next link after $j$ on route $r$ then we set $X_{j_+^rr}=0$.}
\end{equation}
and let $r^*_j(X(t))$ be the solution to this maximization.
\item[BP2.] Over set of schedules $\mS$, solve the optimization
\begin{equation}\label{BPOpt}
\max_{\sigma\in\mS} \;\; \sum_{j\in\mJ} \sigma_j w_j(X(t)),
\end{equation}
and let $\sigma^*(X(t))$ be the solution to this optimization.
\item[BP3.] If $w_j(X(t))>0$, at the next time instance schedule $\sigma_j^*(X(t))$ packets from route $r_j^*(X(t))$ from each link $j\in\mJ$, else, do not schedule any packets on link $j$.
\end{description}
MaxWeight is the special case of BackPressure when we consider a single-hop network. Notice in the first step above, information must be exchanged along links to make a queue size comparison \eqref{BPwOpt}.

In contrast to BackPressure, the Proportional Scheduler does not maintain a queue for each route. It maintains a queue for each link. Further, the Proportional Scheduler does not compare queue sizes with downstream queues. Thus the policy is, in a certain sense, more decentralized. In the Proportional Scheduler packets are selected at random from the queue. In most communications systems packets are served according to a first-in-first-out discipline. We remark that it is possible to prove maximum stability results when the randomized queueing discipline PF2 is replaced by a FIFO queueing discipline. However, the proof is significantly more technical in nature. We refer the read to the forthcoming article for a proof \cite{BDW14}.

\subsubsection*{Tree Network Example}
To emphasis the advantages of the queueing structure of our policy. We consider a tree network where each node of degree $d$ and diameter $D$. See Figure \ref{Tree} for a tree of degree $d=3$ and diameter $D=6$. BackPressure, as described above, requires each node to maintain a queue for each route passing through it. Thus, one can verify, that for this network the central node of a tree must maintain a number of queues given by
\begin{equation*}\label{ExBPbound1}
d(d-1)^{(D -1)}.
\end{equation*}
\begin{figure}[h!]
  \centering
\includegraphics[width=0.45\textwidth]{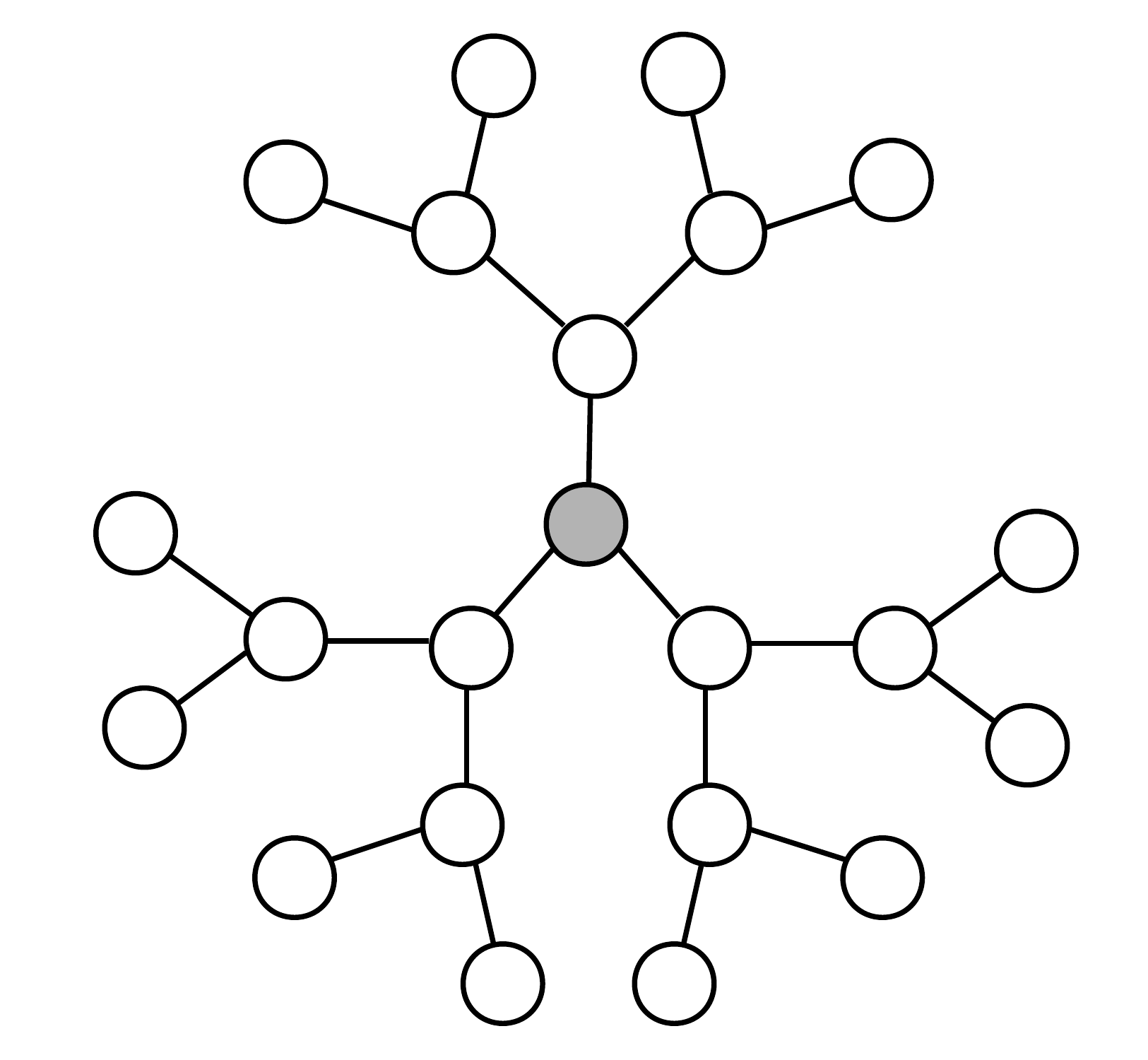}
\caption{A tree network of depth $d=3$ and diameter $D=6$. There are $12$ leaf nodes and thus $132= 12\cdot 11$ directed routes between leaf nodes. \label{Tree} }
\end{figure}
For this example all routes with the same destination can be merged once they intersect. For this reason, BackPressure can be implemented slightly more efficiently in each node by storing a queue for each destination, however even in this case the number of queue\rq{}s required grows as
\begin{equation*}\label{ExBPbound2}
d(d-1)^{({D}/{2}-1)}.
\end{equation*}
In otherwords, to implement BackPressure the nodes of this network must maintain a number of queues that grows exponentially with the diameter of the network. In general, one expects the number of queues required to implement BackPressure to grow substantially with the network\rq{}s size.
However, we observe the Proportional Scheduler maintains a queue for each outgoing link from a node. That is for each node $n$ the policy maintains a queue for each $j=(n,n\rq{})\in\mJ$ rather than each link-route pair $(j,r)\in\mJ\times\mR$. Thus, in our example, a Proportional Scheduler would require a number of queues given by $d$. This is significantly smaller than the number required by BackPressure. In general, for the Proportional Scheduler to be implemented each node requires a queue for each neighboring node, which is smaller than the network size and is often bounded as we increases the size of a network.

\subsubsection*{Fluid Model}

The fluid model associated with the multihop network operating under the Proportional Scheduler, is a positive, absolutely continuous process $x(t)=(x_{jr}(t) : j\in\mJ, r\in\mR)$ and $q(t)= (q_j(t): j\in\mJ)$, $t\in\bR_+$, where for almost every\footnote{By \emph{almost every}, we mean on all points except for a set of Lebesgue measure zero.} time $t\in\bR_+$,
\begin{align}\label{hopfluid}
&\frac{d x_{jr}}{dt} =
\frac{x_{j^r_- r}(t)}{q_{j^r_-}(t)} \sigma^{*}_{j^r_-} (q(t)) - \frac{x_{j r}(t)}{q_{j}(t)} \sigma^{*}_{j}(q(t)),
\end{align}
where ${\sigma}^*(q)$ solves the optimization
\begin{align}
&\text{maximize}\qquad\sum_{j\in\mJ} q_j \log s_j\qquad \text{over}\qquad s\in\coS.\label{MW-PF}
\end{align}
Above, we apply the convention that
\begin{equation}\label{sumxq}
q_j(t)=\sum_{r: j\in r} x_{jr}(t).
\end{equation}
Further, in the equation \eqref{hopfluid}, if the queue $j$ is the first queue on route $r$, i.e. $j=j^r_1$  and $j^r_-=j^r_0$, then we apply the convention that  \footnote{This is similar to the proof of Bramson \cite{Br96} for Head of the line processor sharing networks.
Bramson introduces an additional queue for each route as a proof device.
Whilst here, the introduction of $j^r_0$ is purely for notational and conceptual convenience.}
\begin{equation}
\frac{x_{j^r_0 r}(t)}{q_{j^r_0}(t)} \sigma^{*}_{j^r_0} (q(t))=\bar{a}_{jr}.\label{Convension}
\end{equation}
In this way, we account for external arrivals.

\section{Main results} \label{sec4}
The main results of this article is to prove maximum stability for the $(\alpha,g)$-switch policies in single-hop networks and to prove maximum stability for the Proportional Scheduler  in multihop networks with fixed routing. These results are stated as follows

\begin{theorem}\label{MainThrm}
For the ($\alpha,g$)-switch policy,
if the vector of average arrival rates $\bar{a}=
(\bar{a}_j: j\in\mJ)$ belongs to the interior of $\coS$
then the queue size process $\{Q(t)\}_{t=0}^\infty$  is positive recurrent.
\end{theorem}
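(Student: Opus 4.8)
The plan is to follow the three-step fluid programme announced in the introduction: (i) show that the fluid-scaled queue process converges to solutions of the fluid model \eqref{FluidEqns}--\eqref{MW-sq}; (ii) exhibit a Lyapunov function for the fluid model forcing every solution with $\|q(0)\|_1=1$ to reach the origin by a common time $T$; and (iii) invoke the now-standard fluid-limit criterion (Dai) that finite-time fluid stability, together with the finite second-moment assumption on arrivals, yields positive recurrence of the Markov chain $\{Q(t)\}$.

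The heart of the matter is the choice of Lyapunov function. Since $\bar{a}$ lies in the interior of $\coS$, fix $\epsilon>0$ with $(1+\epsilon)\bar{a}\in\coS$ and introduce the \emph{state-independent} weights $c_j=g_j'\bigl((1+\epsilon)\bar{a}_j\bigr)>0$, positive because each $g_j$ is strictly increasing. I would take
\[
L(q) = \sum_{j\in\mJ} c_j \frac{q_j^{1+\alpha}}{1+\alpha}.
\]
Along a fluid trajectory, for a.e.\ $t$, $\tfrac{d}{dt}L(q(t)) = \sum_{j\in\mJ} c_j q_j^\alpha\bigl(\bar{a}_j-\sigma^*_j(q(t))\bigr)$, the terms with $q_j=0$ dropping out since $\alpha>0$. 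To sign this I combine two facts: optimality of $\sigma^*(q)$ over $\coS$, which gives $\sum_j g_j(\sigma^*_j)q_j^\alpha \ge \sum_j g_j\bigl((1+\epsilon)\bar{a}_j\bigr)q_j^\alpha$, and the tangent-line bound $g_j(\sigma^*_j)\le g_j\bigl((1+\epsilon)\bar{a}_j\bigr)+c_j\bigl(\sigma^*_j-(1+\epsilon)\bar{a}_j\bigr)$ from concavity of $g_j$. Multiplying the latter by $q_j^\alpha\ge0$, summing, and cancelling against the former collapses to
\[
\sum_{j\in\mJ} c_j q_j^\alpha\bigl(\sigma^*_j(q)-\bar{a}_j\bigr) \;\ge\; \epsilon\sum_{j\in\mJ} c_j q_j^\alpha \bar{a}_j \;>\;0 ,
\]
whence $\tfrac{d}{dt}L(q(t))\le -\epsilon\sum_j c_j \bar{a}_j q_j^\alpha<0$ for $q\neq0$. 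The key point is that evaluating $g_j'$ at the fixed interior point $(1+\epsilon)\bar{a}$ rather than at $\sigma^*(q)$ produces constant weights $c_j$, so the drift of $L$ is controlled directly; this is precisely the mechanism of the Bonald--Massouli\'e argument \cite{BoMa01}.

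Finite-time absorption is then routine. A superadditivity (power-mean) estimate gives $\sum_j q_j^\alpha \ge C\,L(q)^{\alpha/(1+\alpha)}$ for some $C>0$, so that
\[
\frac{d}{dt} L(q(t)) \;\le\; -\delta\, L(q(t))^{\alpha/(1+\alpha)},
\]
with exponent $\alpha/(1+\alpha)\in(0,1)$; integrating forces $L(q(t))=0$, hence $q(t)=0$, for all $t\ge T$ with $T$ depending only on $L(q(0))$ and therefore uniform over $\|q(0)\|_1=1$. Boundary states require no special treatment, since where $q_j=0$ the corresponding term vanishes from both $L$ and its derivative, and the drift inequality above was derived for arbitrary $q\ge0$. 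Feeding this fluid stability into the positive-recurrence criterion, using the assumed finite second moments to obtain the requisite uniform integrability of the scaled process, completes the proof.

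I expect the genuine work to lie in step (i), the fluid limit, not in the Lyapunov computation. Strict concavity of the $g_j$ makes $\sigma^*(q)$ the \emph{unique} maximizer and hence, by a maximum-theorem argument, continuous in $q$, which is what permits replacing the empirical service rate by $\sigma^*(q(t))$ in the limit. The delicate points are: showing the prelimit optimum over $\langle\mS_{Q(t)}\rangle$ converges to the optimum over $\coS$, which relies on $\langle\mS_{Q}\rangle$ increasing to $\coS$ as the components of $Q$ grow; controlling the random schedule around its prescribed mean $\bar{\sigma}$ via a functional law of large numbers and a martingale bound; and handling the set of times at which some $q_j(t)=0$ so that the limit is genuinely absolutely continuous and satisfies \eqref{FluidEqns}. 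These are the standard but nontrivial obstacles, and the finite-second-moment hypothesis is exactly what makes the associated tightness and convergence estimates go through.
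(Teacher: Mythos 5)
Your proposal is correct and takes essentially the same route as the paper: the identical Lyapunov function $L(q)=\sum_{j\in\mJ} g_j'\bigl((1+\epsilon)\bar{a}_j\bigr)\,q_j^{1+\alpha}/(1+\alpha)$, the same Bonald--Massouli\'e drift mechanism (your tangent-line-plus-optimality derivation is exactly the paper's gradient inequality $(\sigma^*(q)-\rho)\cdot\nabla G_q(\rho)>0$ with $\rho=(1+\epsilon)\bar{a}$, stated in different words), the same power-mean/norm-equivalence step yielding $\frac{d}{dt}L\le -\delta L^{\alpha/(1+\alpha)}$ and a uniform absorption time, and the same Dai/Bramson multiplicative-Foster conclusion using uniform integrability from the second-moment assumption. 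The paper likewise treats the fluid-limit proposition as standard (citing Robert), so there is no gap between your outline and its argument.
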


\begin{theorem}\label{MainThrm2}
For the Proportional Scheduler applied to a multihop switched network,
if the vector of average arrival rates $\bar{a}=(\bar{a}_j : j\in\mJ)$ belongs to the interior of $\coS$
then the fluid model of this system is stable.
\end{theorem}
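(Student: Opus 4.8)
The plan is to show that there is a single $T>0$ for which every fluid trajectory started from unit total mass is driven to the empty state by time $T$. My strategy is to isolate the variational inequality that the proportional-fair rates $\sigma^*(q)$ must satisfy, use it to build a Lyapunov function for the \emph{aggregate} link queues $q=(q_j)$, and then confront the one genuinely multihop feature: that the rate at which work enters a link depends on the per-route composition, not on $q$ alone. I would present the resolution both as a reduction to the bandwidth-network analysis of Massouli\'e \cite{Ma07} and as a direct argument in the spirit of Bramson \cite{Br96}.

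The engine is the first-order optimality of $\sigma^*(q)$ for the concave program $\max_{s\in\coS}\sum_{j}q_j\log s_j$: since the objective is concave and $\coS$ convex, for every $s\in\coS$,
\begin{equation*}
\sum_{j\in\mJ} q_j\,\frac{s_j}{\sigma^*_j(q)}\ \le\ \sum_{j\in\mJ} q_j\ =\ \|q\|_1 .
\end{equation*}
Because $\bar a=(\bar a_j)$ lies in the interior of $\coS$ there is $\epsilon>0$ with $(1+\epsilon)\bar a\in\coS$; substituting $s=(1+\epsilon)\bar a$ gives $\sum_j q_j\bar a_j/\sigma^*_j(q)\le\|q\|_1/(1+\epsilon)$. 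Taking $L(q)=\tfrac12\sum_j q_j^2/\bar a_j$ and differentiating along a trajectory whose net input to link $j$ equals the external rate $\bar a_j$ yields $\dot L=\|q\|_1-\sum_j q_j\sigma^*_j(q)/\bar a_j$; a Cauchy--Schwarz step against the previous bound gives $\sum_j q_j\sigma^*_j/\bar a_j\ge(1+\epsilon)\|q\|_1$, hence $\dot L\le-\epsilon\|q\|_1\le-\epsilon\sqrt{2\min_j\bar a_j}\,\sqrt{L}$ and $L$ — and therefore $q$ — reaches zero in finite time. This is precisely the bandwidth-sharing fluid argument and it settles the single-hop reading of the model.

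The essential new difficulty is that in a multihop network the rate at which work enters link $j$ is $\sum_{r:j\in r}\tfrac{x_{j^r_- r}}{q_{j^r_-}}\sigma^*_{j^r_-}(q)$, which depends on the route proportions $x_{jr}/q_j$ and not on $q$ alone; the aggregate process $q$ is therefore not closed and the clean computation above does not apply directly. Two structural facts rescue it. First, since the scheduling objective $\sum_j q_j\log s_j$ involves only the totals $q_j$, the vector $\sigma^*(q)$ is a function of $q$ alone. Second, the randomized discipline PS2 serves route-$r$ work at link $j$ at rate exactly $x_{jr}\sigma^*_j(q)/q_j$, i.e.\ in proportion to its occupancy. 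Together these are the properties that make a fixed-route processor-sharing network equivalent, at the level of aggregate link queues, to a single-class network with Markovian routing $P_{j'j}=\big(\textstyle\sum_{r:\,j'\to j}\bar a_r\big)/\bar a_{j'}$, where $\bar a_{j'}$ is the total flow through link $j'$. I would make this correspondence explicit and then invoke Massouli\'e's fluid-stability theorem for proportional fairness on the reduced single-class network.

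The main obstacle — the point both proofs must address — is controlling the route composition so that the internal input to each link settles to the value dictated by the nominal routing. For the direct argument I would follow Bramson \cite{Br96}: introduce the relative-entropy functional $H(x)=\sum_{j}\sum_{r:j\in r}x_{jr}\log\frac{x_{jr}}{q_j p_{jr}}$ with nominal proportions $p_{jr}=\bar a_r/\bar a_j$, prove that $H$ is non-increasing along fluid trajectories (a state-space-collapse statement forcing $x_{jr}\to q_j p_{jr}$), and observe that on the resulting nominal manifold the routing fractions reduce to the constants $P_{j'j}$, so that the aggregate queues $q$ evolve exactly as the single-class Markovian-routing fluid model whose draining was established above. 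The delicate step, where the exact proportionality supplied by PS2 does the decisive work, is reconciling the two time scales — the entropy collapse of the composition against the quadratic draining of the workload — into a single finite-$T$ bound.
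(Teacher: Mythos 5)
Your single-hop engine is sound (it is essentially the paper's Theorem~\ref{MainThrm} argument specialized to $\alpha=1$, $g_j=\log$), and you have correctly isolated the two structural facts that matter: $\sigma^*$ depends only on the totals $q_j$, and PS2 serves route $r$ at link $j$ at rate $x_{jr}\sigma^*_j(q)/q_j$. But the way you deploy them contains a genuine gap. Your reduction replaces the multiclass state $x$ by the aggregate $q$ plus the Markovian routing matrix $P_{j'j}=\big(\sum_{r:\,j'\to j}\bar{a}_r\big)/\bar{a}_{j'}$; as you yourself note, the resulting identity for the input rate into link $j$ holds only when the composition sits on the nominal manifold $x_{jr}=q_j\bar{a}_r/\bar{a}_j$, and fails in general. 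Your fix --- prove state-space collapse to that manifold via the composition entropy $H(x)=\sum_{j}\sum_{r:\,j\in r}x_{jr}\log\big(x_{jr}\bar{a}_j/(q_j\bar{a}_r)\big)$ and then quote Massouli\'e --- does not work, because that manifold is not invariant under the fluid dynamics \eqref{hopfluid} and is not where trajectories head. Flow balance along a route forces $x_{jr}\sigma^*_j(q)/q_j=\bar{a}_r$, i.e.\ composition $x_{jr}/q_j=\bar{a}_r/\sigma^*_j(q)$, not $\bar{a}_r/\bar{a}_j$; and in fact no state with $q\neq 0$ can be flow-balanced at all, since summing over $r\ni j$ would give $\sigma^*_j(q)=\bar{a}_j$, contradicting that $\bar{a}$ lies in the interior of $\coS$ while $\sigma^*(q)$ is Pareto optimal on its boundary --- this is exactly the contradiction exploited in Lemma~\ref{LyuEpsLem}. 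For the same reason the monotonicity of your entropy functional is unproven and should not be expected: the Bramson-type functional that works has the \emph{service rates} inside the logarithm, not the nominal loads. The ``delicate step'' you defer, reconciling the collapse time scale with the draining time scale, is not a technicality; it is the entire content of the theorem, and on your route it would have to be carried out against a collapse statement that is false.

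The paper resolves the difficulty in the opposite direction. In the first proof, instead of aggregating it \emph{expands}: the Massouli\'e stations are taken to be the pairs $(j,r)$ with deterministic routing $(j,r)\to(j^r_+,r)$, and Lemma~\ref{LogLem} (the additive split of the logarithm under proportional sharing of a link) shows that the PS2 rates $x_{jr}\sigma^*_j(q)/q_j$ are precisely the proportionally fair allocation of this expanded single-class network, so Massouli\'e's theorem applies verbatim --- no collapse, no separation of time scales. In the second proof, the Lyapunov function is $H(x)=\sum_{r}\sum_{j\in r}x_{jr}\log\big(x_{jr}\sigma^*_j(q)/(q_j\bar{a}_r)\big)$, with $\sigma^*_j(q)$ \emph{inside} the logarithm; the envelope-type Lemma~\ref{PartialH} shows its partial derivatives are the same as if $\sigma^*$ were constant, and Pinsker's inequality (Lemma~\ref{pinklem}) together with Lemma~\ref{LyuEpsLem} then gives $dH/dt\le-\epsilon$ uniformly in $x\neq 0$, so finite-time draining follows from this single functional, with composition and workload handled simultaneously. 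If you want to salvage your outline, abandon the aggregation and replace your entropy by this $H$; the two structural facts you identified then do exactly the work you intended.
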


 The first result is proven by appropriately modifying the Lyapunov argument of Bonald and Massouli\'e for weighted $\alpha$-fair bandwidth networks. For the second result, we give two proofs. The first proof shows how the result can be demonstrated within the framework of Massouli\'e \cite{Ma07}. Essentially Massouli\'e's result proves maximum stability of proportional fairness in single-class networks with probabilistic routing (i.e. Jackson type \cite{Ja63}), whilst we wish to prove maximum stability of proportional fairness for multiclass networks with fixed routing (i.e. Kelly type \cite{BCMP75,Ke79}).
 We note that, due to the structure of the logarithm used to define the Proportional Scheduler, the latter problem can be reduced to the former. The second proof observes the result can be proven within the framework of Bramson \cite{Br96}. Bramson shows the stability of fluid models of networks of processor sharing queues. We show by the differentiability properties of the proportionally fair objective that the fluid model for the Proportional Scheduler has essentially the same behavior as a network of processor sharing queues. Both these proofs and a number of consequences and conjectures rely on deeper connections with reversible queueing systems. These are described in the forthcoming paper \cite{BDW14}.

Theorem \ref{MainThrm} shows that the class of $(\alpha,g)$-switch policies have the same stability properties as the MaxWeight policies. This, like MaxWeight, is proven for single-hop networks; however, for multiclass queueing networks, results that hold for single-hop systems do not naturally extend when directly applied multihop systems. See Chapter 3 of Bramson \cite{Br08} for various counter examples of this type. This point is particularly poignant for switch networks where there has been a great deal of recent progress on the stability of single-hop switch networks, \cite{JiWa10,ShSh12}. Theorem \ref{MainThrm2} gives the first result where stability of a single-hop switch network implies stability of its multihop counterpart without requiring additional routing information. See Dieker and Shin \cite{dieker2013local} for some recent related work.
The result relies on the specific properties of the proportional fair optimization. Nonetheless the hope is this initial result will lead to general methods for extending stability results from single-hop to multihop networks.

There are a number of immediate advantages of the proportional fair optimization in comparison to BackPressure. We only need to maintain a queue for each outgoing link, whilst BackPressure requires a vastly more complex data structure. BackPressure must know the route class of each packet within the network to make a routing decision whilst for our policy packets can be routed by only knowing each packets next hop.
Messages do not need to be sent to calculate weights and so scheduling decisions can be completely decomposed between independently functioning components. That is, if $\mS$ and $\mJ$ can be split into independent components $\prod_{k} \mS_k$ and $\cup_k \mJ_k$, then
\begin{equation*}
\max_{s\in <\! \mS\! >} \sum_{j\in\mJ} Q_j(t) \log s_j = \sum_k \max_{s\in <\! \mS_k\! >} \sum_{j\in\mJ_k} Q_j(t) \log s_j.
\end{equation*}
Here we imagine each $\mS_k$ corresponds to the constraints of an independent network component, for example, an input queued switch for each $k$. The above statement shows that, without messaging, we can solve this optimization separately for each component, implement the decision and still guarantee stability.

Informally speaking, for the Proportional Scheduler, does not need to know the route structure of the network to make a scheduling decision, whilst BackPressure does. For this reason, the proportional fair implementation, scales far better with network size.

\section{Proof of Theorem 1}\label{sec5}
The proof of Theorem  \ref{MainThrm} follows the fluid limit approach of Dai \cite{Da95}. We prove that a certain fluid model is satisfied by the limit of a sequence Markov chains using $(\alpha,g)$-switch policy. We prove that the associated fluid model with the $(\alpha,g)$-policy is stable.
We then give a proof of positive recurrence of the original Markov chain.

\subsection{Fluid model}
In this section, we state fluid model equations and the fluid limit associated with the ($\alpha,g$)-switch policy.

Let $\{ Q^{(c)} \}_{c\in\bN}$ be a sequence of versions of our queueing process for the ($\alpha,g$)-switch policy, where $||Q(0)||_1=c$. We define
\begin{align}
\bar{Q}^{(c)}(t)& =\frac{Q^{(c)}(\lfloor ct\rfloor)}{c},
\end{align}
for $t\in\bR_+$.  The following result formalizes a fluid model $q$, \eqref{FluidEqns}, as the limit of $\{ \bar{Q}^{(c)} \}_{c\in\bN}$. In informal terms, it states that the only possible limit of the sequence $\{ \bar{Q}^{(c)} \}_{c\in\bN}$ as $c\rightarrow\infty$ is a process $q$ satisfying \eqref{FluidEqns}.

\begin{proposition}[Fluid Limit]\label{FluidLimThrm}
The sequence of stochastic processes $\{ \bar{Q}^{(c)}\}_{c\in\bN}$ is tight\footnote{Recall a sequence of random processes $\{ \bar{Q}^{(c)}\}_{c}$ is tight if every subsequence of $\{ \bar{Q}^{(c)}\}_{c}$ has a weakly convergent subsequence.} with respect to the topology of uniform convergence on compact time intervals. Moreover,  any weakly convergent subsequence of  $\{ \bar{Q}^{(c)}\}_{c\in\bN}$ converges to a Lipschitz continuous process satisfying fluid equation \eqref{FluidEqns}.
\end{proposition}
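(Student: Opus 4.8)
The plan is to establish the standard two-part fluid-limit statement — tightness followed by identification of subsequential limits — using the functional law of large numbers for the arrival process together with the boundedness of schedules. Throughout I will work with the rescaled process $\bar{Q}^{(c)}$ and exploit that schedule increments are uniformly bounded by $\sigma_{\max}$, so that over any time interval the cumulative service is Lipschitz in time.

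\textbf{Step 1: Setting up the cumulative process and decomposition.}
First I would write the unscaled queueing recursion in cumulative form, $Q_j(t) = Q_j(0) + A_j(t) - S_j(t)$, where $A_j(t)=\sum_{u=1}^{t} a_j(u)$ is the cumulative arrival process and $S_j(t)=\sum_{u=1}^{t}\sigma_j(u)$ is the cumulative service. After rescaling by $c$ in both space and time, the arrival term $\bar{A}^{(c)}_j(t)=A_j(\lfloor ct\rfloor)/c$ converges almost surely, uniformly on compacts, to the deterministic line $\bar{a}_j\, t$ by the functional strong law of large numbers (the i.i.d.\ assumption with finite mean suffices; finite second moment gives the convenient tightness bounds mentioned in the footnote). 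The service term $\bar{S}^{(c)}_j(t)=S_j(\lfloor ct\rfloor)/c$ is automatically Lipschitz with constant $\sigma_{\max}$ in $t$, uniformly in $c$, since each increment $\sigma_j(u)\le\sigma_{\max}$.

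\textbf{Step 2: Tightness.}
Tightness of $\{\bar{Q}^{(c)}\}$ follows from tightness of the arrival and service components. The service components are equi-Lipschitz and uniformly bounded on compacts, so they are relatively compact by Arzel\`a--Ascoli; the arrival components converge and hence are tight. Therefore $\bar{Q}^{(c)}$ is tight in the topology of uniform convergence on compact intervals, and every subsequential limit $q$ is Lipschitz continuous, hence absolutely continuous and differentiable almost everywhere.

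\textbf{Step 3: Identifying the limit.}
This is where the real content lies and, I expect, the main obstacle. Along a convergent subsequence, $\bar{Q}^{(c)}\to q$ and $\bar{S}^{(c)}\to s^*$ for some limiting Lipschitz $s^*$, with $dq_j/dt = \bar{a}_j - \dot{s}^*_j(t)$ a.e. I must show that at almost every regular time $t$ with $q_j(t)>0$, the limiting service rate $\dot{s}^*_j(t)$ equals $\sigma^*_j(q(t))$, the solution of the optimization \eqref{MW-sq} over $\coS$. The subtlety is twofold. First, the prelimit optimization \eqref{MW-af} is taken over the queue-dependent feasible set $\coqS{t}$, whereas the fluid optimization \eqref{MW-sq} is over the full $\coS$; I would argue that when all coordinates $q_j(t)>0$, the rescaled queues $\bar{Q}^{(c)}_j$ are bounded away from zero on a neighborhood of $t$, so the constraint $\sigma\wedge Q$ becomes inactive for large $c$ and $\coqS{t}$ effectively coincides with $\coS$ there. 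Second, I must handle the randomized schedule: since $\sigma(t+1)$ has conditional mean $\bar{\sigma}(t+1)$ equal to the optimizer, I would control the martingale difference $\sum_u(\sigma(u)-\bar{\sigma}(u))$ via the $L^2$ bound and a maximal inequality, showing its rescaled version vanishes uniformly on compacts, so the limit is governed by the deterministic optimizer. Continuity of the $\argmax$ map $q\mapsto\sigma^*(q)$ — which follows from strict concavity of the $g_j$ guaranteeing a unique maximizer depending continuously on $q$ on the positive orthant — then lets me pass to the limit and conclude $\dot{s}^*_j(t)=\sigma^*_j(q(t))$, giving \eqref{FluidEqns}.

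\textbf{Step 4: Assembling the conclusion.}
Combining the identification with Step 2 yields that every weakly convergent subsequential limit is a Lipschitz process satisfying \eqref{FluidEqns}, which is precisely the claim. The hardest part is Step 3, specifically justifying the interchange of limit and optimization near the boundary of the positive orthant and controlling the randomization error uniformly; I would isolate these into a short lemma on the continuity of $\sigma^*(\cdot)$ and a martingale estimate, keeping the main argument clean.
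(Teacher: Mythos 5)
Your outline is the standard Dai-style fluid-limit programme, and in that sense it coincides with the paper, which in fact writes out no proof at all: it declares Proposition \ref{FluidLimThrm} ``somewhat standard'' and refers the reader to Robert \cite{Ro10}. The machinery you describe -- the cumulative decomposition $Q_j(t)=Q_j(0)+A_j(t)-S_j(t)$, the functional SLLN for arrivals, the equi-Lipschitz bound $\sigma_{\max}$ on service increments with Arzel\`a--Ascoli for tightness, and a Doob/Azuma maximal inequality for the centred schedule sum $\sum_u(\sigma(u)-\bar\sigma(u))$ -- is precisely what that reference supplies, so Steps 1, 2 and 4 are fine as sketched.

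There is, however, a genuine gap in Step 3, at exactly the point you flag but then resolve too weakly. You reconcile the prelimit feasible set $<\!\mS_{Q(t)}\!>$ with the fluid feasible set $\coS$ only ``when all coordinates $q_j(t)>0$''. But \eqref{FluidEqns} is a \emph{per-coordinate} statement: it must be identified at a.e.\ time $t$ with $q_j(t)>0$ even when other coordinates satisfy $q_k(t)=0$, and such mixed states cannot be avoided -- they occur on any trajectory whose coordinates empty at different times, and the stability argument of Proposition \ref{FluidStabProof} is applied along exactly such trajectories (a.e.\ on $\{t: q_j(t)=0\}$ one has $\dot q_j=0$, so it is precisely the positive coordinates whose dynamics must be pinned down there). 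To close the gap, note that the truncation $\sigma\wedge Q$ is inactive in every coordinate with $Q_j\geq \sigma_{\max}$, so for large $c$ the projection of $<\!\mS_{Q}\!>$ onto the coordinates $\{j: q_j(t)>0\}$ coincides with the projection of $\coS$; the coordinates with $q_k(t)=0$ do remain truncated, but they enter the objective with weight $Q_k^\alpha=o(c^\alpha)$, while the positive coordinates carry weight of order $c^\alpha$. A perturbation (Berge maximum theorem) argument then gives convergence of the prelimit optimizer, \emph{in the positive coordinates}, to the projection of a solution $\sigma^*(q(t))$ of \eqref{MW-sq}. Note also that it is only this projection that is unique: strict concavity acts only on positively weighted coordinates, and $\sigma^*(q)$ is not unique in coordinates with $q_k=0$, so your appeal to ``continuity of the argmax map'' must likewise be restricted to the positive coordinates. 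With this repair Step 3, and hence the proposition, goes through.
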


\noindent The statement of Proposition \ref{FluidLimThrm} is somewhat technical. However, the main point is that we can compare the queueing process to a tractable fluid model, $q$ satisfying equations \eqref{FluidEqns}. The proof of Proposition \ref{FluidLimThrm} is somewhat standard. We refer the reader to Robert \cite{Ro10}. We now analyze the stability of the fluid models.

\subsection{Fluid Stability}\label{sec7}
We consider a process $\{q(t)\}_{t\in\bR_+}$, that satisfies the fluid limit equations \eqref{FluidEqns}. In the following theorem, we show that these fluid solutions are stable in the sense that they hit the zero state in finite time. This result will be sufficient to prove positive recurrence of the ($\alpha,g$)-switch policy.

 \begin{proposition}[Fluid Stability]\label{FluidStabProof}
Given $(\bar{a}_j : j\in\mJ)$ belongs to the interior of $\coS$. There exists a time $T>0$ such that,
for every fluid model $\{q(t)\}_{t\in\bR_+}$ satisfying \eqref{FluidEqns} and with $||q(0)||_1=1$,
\begin{align}
&\qquad\qquad\qquad q_j(t)=0, && j\in\mJ,
\end{align}
{ for all } $t\geq T$.
\end{proposition}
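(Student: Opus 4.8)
The plan is to combine the fluid limit of Proposition \ref{FluidLimThrm} with a Lyapunov argument for the fluid model, in the spirit of Dai: since every subsequential limit of $\{\bar Q^{(c)}\}$ is a fluid solution of \eqref{FluidEqns}, positive recurrence in Theorem \ref{MainThrm} follows once we establish exactly this proposition, namely that every fluid solution with $||q(0)||_1=1$ is absorbed at the origin by a universal time $T$. So I would work entirely with the fluid equations \eqref{FluidEqns}.

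First I would record the two structural facts that drive the argument. Because $\bar a$ lies in the interior of $\coS$ (and $\textbf{0}\in\coS$), there is $\epsilon>0$ with $(1+\epsilon)\bar a\in\coS$. Second, since each $g_j$ is strictly concave, the optimiser $\sigma^*(q)$ of \eqref{MW-sq} is unique and is characterised by the variational inequality $\sum_{j\in\mJ} g_j'(\sigma_j^*(q))\,q_j^\alpha\,(s_j-\sigma_j^*(q))\le 0$ for every $s\in\coS$. Testing this against $s=(1+\epsilon)\bar a$ gives the fundamental estimate
\begin{equation*}
\sum_{j\in\mJ} g_j'(\sigma_j^*(q))\, q_j^\alpha\,\big(\bar a_j - \sigma_j^*(q)\big)\;\le\;-\,\epsilon \sum_{j\in\mJ} g_j'(\sigma_j^*(q))\, q_j^\alpha\, \bar a_j\;<\;0 ,
\end{equation*}
valid whenever $q\neq 0$ (recall $\bar a\in(0,\infty)^\mJ$ and $g_j'>0$). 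Interpreting $p_j(q):=g_j'(\sigma_j^*(q))\,q_j^\alpha$ as a vector of ``prices,'' this says the drift $\bar a-\sigma^*(q)$ points strictly inward with respect to $p(q)$.

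Next I would choose a Lyapunov function in the style of Bonald and Massoulié and differentiate along the trajectory, so that $\frac{d}{dt}L(q(t))=\sum_{j}\partial_{q_j}L(q)\,(\bar a_j-\sigma_j^*(q))$. The ideal choice would make $\partial_{q_j}L(q)$ proportional to $p_j(q)$, since the displayed estimate would then immediately force $\frac{d}{dt}L<0$. I expect the main obstacle to lie precisely here. The naive weighted-power candidate $L(q)=\sum_j q_j^{1+\alpha}/(1+\alpha)$ does not directly work: since $\sigma^*(q)$ maximises the $g$-weighted objective rather than the linear weight $\sum_j q_j^\alpha s_j$, its gradient $q_j^\alpha$ need not align with $p_j(q)$, and the instantaneous drift can in fact be positive (this already occurs for $\alpha=1$, $g_j=\log$ on a single shared link). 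One therefore has to construct a Lyapunov function adapted to the pair $(\alpha,g)$ whose gradient is compatible with the prices along the active faces of $\coS$, and to control the residual mismatch using the strict concavity of the $g_j$ together with the variational inequality above. This drift lemma is the technical heart of the proof and the exact point at which the Bonald--Massoulié argument must be modified for the switch setting.

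Finally, to pass from negative drift to absorption in finite time I would exploit the scale invariance of the allocation: scaling $q$ by $\lambda>0$ multiplies the objective in \eqref{MW-sq} by $\lambda^\alpha$, so $\sigma^*(\lambda q)=\sigma^*(q)$ is homogeneous of degree zero. Hence the drift bound from the lemma is homogeneous, and if $L$ is homogeneous of positive degree then a uniform bound $\frac{d}{dt}L\le-\kappa$ on the compact set $\{||q||_1=1\}$ upgrades, by scaling, to a differential inequality $\frac{d}{dt}L(q(t))\le-c\,L(q(t))^{\gamma}$ with $\gamma<1$. Integrating yields absorption at $0$ by a time $T$ depending only on $L(q(0))$, uniformly over $||q(0)||_1=1$, which is the assertion of the proposition. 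The only remaining bookkeeping is the behaviour on the boundary $\{q_j=0\}$, where \eqref{FluidEqns} constrains only the strictly positive coordinates; there I would check that coordinates resting at zero cannot increase $L$, so that the differential inequality continues to hold for almost every $t$.
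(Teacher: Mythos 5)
Your architecture matches the paper's (fix $\epsilon$ with $(1+\epsilon)\bar a\in\coS$, find a Lyapunov function whose gradient aligns with a price vector, exploit the degree-zero homogeneity of $\sigma^*$ to get $\frac{d}{dt}L\le -c\,L^{\alpha/(1+\alpha)}$, and integrate to obtain a uniform absorption time $T$), and your endgame is the paper's endgame. But the proof is incomplete at exactly the point you yourself flag as ``the technical heart'': you never exhibit the Lyapunov function, you only assert that one ``has to be constructed.'' Worse, the starting inequality you chose points away from any such construction. The variational inequality at the optimizer yields the prices $p_j(q)=g_j'(\sigma_j^*(q))\,q_j^\alpha$, and a field of this form need not be a gradient field (its Jacobian is generally not symmetric, since $\sigma^*(q)$ depends jointly on all coordinates of $q$ through the geometry of $\coS$), so there is in general no $L$ with $\partial_{q_j}L=p_j(q)$ to integrate along trajectories; ``controlling the residual mismatch'' is not a repair, it is the unsolved problem restated.

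The missing idea --- which is precisely the Bonald--Massouli\'e trick the paper uses --- is to evaluate the gradient of the objective at the \emph{fixed} interior point $\rho=(1+\epsilon)\bar a$ rather than at the moving optimizer. Writing $G_q(s)=\sum_{j\in\mJ}g_j(s_j)q_j^\alpha$, concavity together with the suboptimality of $\rho$ (i.e.\ $G_q(\rho)<G_q(\sigma^*(q))$ for $q\neq 0$) gives $\bigl(\sigma^*(q)-\rho\bigr)\cdot\nabla G_q(\rho)>0$, which rearranges to
\begin{equation*}
\sum_{j\in\mJ}\bigl(\bar a_j-\sigma_j^*(q)\bigr)\,g_j'(\rho_j)\,q_j^\alpha\;\le\;-\,\epsilon\sum_{j\in\mJ}\bar a_j\,g_j'(\rho_j)\,q_j^\alpha,
\end{equation*}
this is \eqref{rhocond} in the paper. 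Now the weights $g_j'(\rho_j)$ are \emph{constants}, so this price vector is exactly the gradient of the explicit function $L(q)=\sum_{j\in\mJ}g_j'(\rho_j)\,q_j^{1+\alpha}/(1+\alpha)$, the paper's Lyapunov function \eqref{MW-Lya}; note it is a $\rho$-weighted power function, not one ``adapted to the active faces of $\coS$'' as you anticipated. With this $L$ in hand, your final paragraph (homogeneity, the differential inequality $\frac{d}{dt}L\le -c\,L^{\alpha/(1+\alpha)}$, integration, and the boundary bookkeeping) goes through essentially as you wrote it and as the paper does it. So: correct skeleton, correct identification of where the difficulty sits, but the single construction that resolves it is absent, and your variational-inequality starting point is the wrong door into the argument.
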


The main idea is to consider the gradient of the tangent line of the ($\alpha,g$)-switch policy objective between two points: the arrival rate and the optimal solution, see \eqref{rhocond} below. Integrating this obtains a Lyapunov function \eqref{MW-Lya}. This idea is used by Bonald and Massouli\'e \cite{BoMa01} in their analysis of weighted $\alpha$-fair bandwidth sharing networks. For switch networks, Proposition \ref{FluidStabProof} follows analogously.

\begin{proof}
We define $G_q(s)$ to be the objective of the ($\alpha,g$)-switch policy optimization,
\begin{equation}
G_q(s)=\sum_{j\in\mJ} g_j(s_j) q_j^\alpha.
\end{equation}
Recall that in our fluid equations
\begin{equation*}
{\sigma}^*(q(t))\in\argmax_{s\in\coS}\; G_{q(t)}(s).
\end{equation*}
Any vector $\rho$ belonging to the interior of $\coS$ is not optimal. Thus  $G_{q(t)}(\rho)< G_{q(t)}({\sigma}^*(q(t)))$. As $G_{q(t)}(\cdot)$ is strictly concave,  $G_{q(t)}(\cdot)$ must be increasing along the line connecting $\rho$ to ${\sigma}^*(q(t))$. In other words, for all $\rho$ in the interior of $\coS$ and for  $q(t)\neq 0$,
\begin{equation}\label{rhocond}
\Big({\sigma}^*(q(t))-\rho\Big) \cdot \nabla G_{q(t)}(\rho) > 0.
\end{equation}
Here $\nabla G_{q}(\rho)= ( g'_j(\rho_j) q_j(t)^\alpha : j\in\mJ)$.
Since $\bar{a}$ belongs to the interior of $\coS$, there exists $\epsilon>0$ such that $ (1+\epsilon) \bar{a}\in \coS$. We define $\rho=(1+\epsilon)\bar{a}$. In this case, we can re-express the inequality \eqref{rhocond} as follows
\begin{equation}\label{ineq}
\sum_{j\in\mJ} \Big(\bar{a}_j-{\sigma}_j^*(q(t)) \Big) g_j'(\rho_j) q_j(t)^\alpha \leq -\epsilon \sum_{j\in\mJ} g_j'(\rho_j) q_j(t)^\alpha .
\end{equation}
As $\frac{d q_j}{dt}=\bar{a}_j-{\sigma}_j^*(q(t))$, we define the Lyapunov function
\begin{equation}\label{MW-Lya}
L(q)= \sum_{j\in\mJ} g'(\rho_j) \frac{q_j^{1+\alpha}}{1+\alpha},
\end{equation}
$q\in\bR_+^\mJ$. The function $L(q)$ is positive, and $L(q)=0$ iff $q_j=0$ for all $j\in\mJ$. We now observe
\begin{align}
\frac{dL(q(t))}{dt}&=\sum_{j\in\mJ} \Big(\bar{a}_j-{\sigma}_j^*(q(t)) \Big)g_j'(\rho_j) q_j(t)^\alpha \label{diffinequ0}\\
& \leq  -\epsilon \sum_{j\in\mJ} g_j'(\rho_j) q_j(t)^\alpha. \label{diffinequ}
\end{align}
The equality holds by the chain rule and the inequality holds by \eqref{ineq}.

We define two norms on $\bR_+^\mJ$
\begin{align*}
 ||q||_{1+\alpha} = ( L(q) )^{\frac{1}{1+\alpha}},\quad  || q||_{\alpha} = \Big( \sum_{j\in\mJ} g_j'(\rho_j) q_j(t)^\alpha\Big)^{\frac{1}{\alpha}}.
\end{align*}
By the Lipschitz equivalence of norms, there is a constant $\gamma>0$ such that $$ \gamma ||q||_{1+\alpha} \leq ||q||_\alpha,$$ for all $q\in\bR_+^\mJ$.\footnote{ Note,  $||q||_{1+\alpha} \leq (1+\alpha)^{-\frac{1}{1+\alpha}}|\mJ| \max_j g'_j(\rho_j) q_j $ and also note that  $\max_j g'_j(\rho_j) q_j\leq ||q||_{\alpha}$. So, for instance, we can take $\gamma=(1+\alpha)^{\frac{1}{1+\alpha}} | \mJ|^{-1}$.} Applying this observation to the inequality \eqref{diffinequ}, we see that
\begin{equation}\label{Lineq}
\frac{dL(q(t))}{dt} \leq -\epsilon \gamma^{\alpha}  L(q(t))^{\frac{\alpha}{1+\alpha}}.
\end{equation}
Observe, by the above inequality, if $L(q(T))=0$ for any differentiable point $T$ then  $L(q(t))=0$ for all $t\geq T$. Now, whilst $L(q(t))>0$, we have from \eqref{Lineq} that
\begin{align*}
&L(q(t))^\frac{1}{1+\alpha}-L(q(0))^\frac{1}{1+\alpha} \\
=& \int_0^t (1+\alpha)^{-1} L(q(t))^{\frac{-\alpha}{1+\alpha}}\frac{dL(q(t))}{dt} dt\\
\leq &  -\epsilon (1+\alpha)^{-1} \gamma^{\alpha} t.
\end{align*}
Rearranging this expression, we see that for all times $t$
\begin{equation}
L(q(t))\leq  \left( L(q(0))^\frac{1}{1+\alpha}-\epsilon (1+\alpha)^{-1} \gamma^{\alpha} t \right)_+^{1+\alpha}.
\end{equation}
The function $L(q)$ is continuous and therefore bounded above by a constant, $K$, for all values of $q$ with $||q||_1=1$. Hence, if $||q(0)||_1=1$, $L(q(t))=0$ for all $t\geq T$ where
\begin{equation*}
T=\frac{ (1+\alpha) K^{\frac{1}{1+\alpha}}}{\epsilon\gamma^{\alpha}},
\end{equation*}
and thus, as required, $q_j(t)=0$, $j\in\mJ$, for all $t\geq T$.
\end{proof}

The MaxWeight policies are designed to minimize the drift of a Lyapunov function. Notice, if we applied this rationale to $L$, above, then minimizing  \eqref{diffinequ0} would recover a form of MaxWeight-$\alpha$ policy. However,  an interesting feature of the approach of Bonald and Massoulie \cite{BoMa03} that is applied above is that the policy considered is not designed to minimize drift and thus resources are shared between different schedules rather than prioritized. This is an important feature when extending to multihop networks. Further, because of this feature, solutions to the optimization change continuously with changes in queue size. This could be important from a computation perspective when tracking the current optimal policy.

\subsection{Positive Recurrence}\label{sec8}
We are now in a position to combine Propositions \ref{FluidLimThrm} and \ref{FluidStabProof} to prove Theorem \ref{MainThrm}. We can at this point apply the general stability results of Dai \cite{Da95,Da95b} and Bramson \cite{Br08}.

\begin{proof}[Proof of Theorem \ref{MainThrm}]
For every $t\geq 0$, the sequence of queue sizes $\{ \bar{Q}^{(c)}(t)\}_{c\in\bN}$ is uniformly integrable. This is because the queue size process $\bar{Q}^{(c)}(t)$ is less than the number of arrivals $\bar{A}^{(c)}$, which is $L^2$ bounded, since we assume its increments have bounded variance. Now, by Proposition \ref{FluidLimThrm}, for any unbounded sequence in $\bN$, there is a subsequence $\{c_k\}_{k\in\bN}$ for which $\bar{Q}^{(c_k)}$ converges in distribution to fluid solution ${q}$. Let $T$ be the time given in Proposition \ref{FluidStabProof}, where $q_j(T)=0$ for $j\in\mJ$.
Since $\{|\bar{Q}^{(c_k)}(T)|\}_{c_k}$ is uniformly integrable and converges in distribution to ${q}(T)$, we also have $L_1$ convergence
\begin{equation}\label{fosters:fl1}
\lim_{c_k\rightarrow\infty} \bE ||\bar{Q}^{(c_k)}(T)||_1 = \bE ||q(T)||_1=0.
\end{equation}
This implies there exists a $\kappa$ such that for all $c> \kappa$
\begin{equation}\label{fosters:fl2}
\bE ||\bar{Q}^{(c)}(T)||_1  < (1-\epsilon).
\end{equation}
Note that if \eqref{fosters:fl2} did not hold then we could find a subsequence for which \eqref{fosters:fl1} did not hold; thus, we would have a contradiction. Expanding this inequality \eqref{fosters:fl2}, we have as described by Bramson \cite{Br08}, the following \emph{multiplicative Foster's condition}: for $\big|\big|{Q}(0)\big|\big|_1 > \kappa$
\begin{equation}\label{FosterCond}
\bE\left[ \big|\big|{Q}(T||{Q}(0)||_1)\big|\big|_1-\big|\big|{Q}(0)\big|\big|_1 \Big|  {Q}(0) \right] <  -\epsilon \big|\big|{Q}(0)\big|\big|_1 .
\end{equation}
This then implies our process is positive recurrent, see \cite[Proposition 4.6]{Br08}. 
\end{proof}

\section{Proof of Theorem 2}\label{sec6}
We give two proofs of Theorem \ref{MainThrm2}. The first proof reduces our multiclass fixed routing structure (Kelly type) to a single-class routing structure (Jackson type). From here we can apply the fluid stability proof of Massouli\'e \cite{Ma07}. The second proof observes the Lyapunov function arguments of Bramson \cite{Br96}, which apply for head-of-the-line processor sharing queueing networks, can be extended when services rates are proportionally fair instead of fixed.

\subsection{First Proof}

We state the following result on the fluid model for a single-class proportionally fair system with probabilistic routing. Massouli\'e \cite{Ma07} considered a process $(q_j(t) : j\in\mJ)$ satisfying, for $q_j(t)>0$, the following system of fluid equations
\begin{equation}\label{Fluid2}
\frac{d q_j}{dt} = a_j + \sum_{l\in\mJ} \sigma^*_l(q) p_{lj}   - \sigma^*_j(q),
\end{equation}
where $\sigma^*_j(q)$ gives the solution to the proportionally fair optimization, \eqref{MW-sq}.
The matrix $P=(p_{lj})_{lj}$ gives the probability that a link $l$ job next joins link $j$. So all packets depart this system, it is assumed that this matrix is sub-stochastic. From this one can calculate from external arrival rates $a=(a_j:j\in\mJ)$ the load induced on each link as
\begin{equation}
\bar{a}= (1- P^\T )^{-1} a.
\end{equation}
Massouli\'e proves the following fluid stability result.

\begin{theorem}[Massouli\'e '07]
If $\bar{a}$ belongs to the interior of the set $\coS$ then the fluid equations \eqref {Fluid2} are stable.
\end{theorem}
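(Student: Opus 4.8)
The plan is to prove fluid stability by exhibiting a Lyapunov function $L$ that is positive, vanishes only at the origin, and satisfies $\frac{dL}{dt}\le -\eta<0$ along every fluid trajectory with $q(t)\neq 0$, the rate $\eta$ being uniform over normalized initial data; integrating this differential inequality then forces $q$ to reach $0$ in a finite time $T$, exactly as in the proof of Proposition \ref{FluidStabProof}. The single tool driving the drift estimate is the first-order optimality of the proportionally fair allocation: since $\sigma^*(q)$ maximizes the strictly concave $\sum_{j\in\mJ} q_j\log s_j$ over the convex set $\coS$, for every feasible $\rho\in\coS$ one has the variational inequality $\sum_{j\in\mJ}\frac{q_j}{\sigma^*_j(q)}\big(\rho_j-\sigma^*_j(q)\big)\le 0$, equivalently $\sum_{j\in\mJ} q_j\,\rho_j/\sigma^*_j(q)\le\sum_{j\in\mJ} q_j$. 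Because $\bar a$ lies in the interior of $\coS$, I would fix $\epsilon>0$ with $\rho:=(1+\epsilon)\bar a\in\coS$ and use this $\rho$ throughout.

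First I would dispose of the case without routing ($P=0$, so $a=\bar a$), which is essentially the $\alpha=1$, $g_j=\log$ instance of the $(\alpha,g)$-policy already covered by Proposition \ref{FluidStabProof}. Here the natural Lyapunov function is the load-weighted quadratic $L(q)=\sum_{j\in\mJ}q_j^2/(2\rho_j)$, whose gradient $q_j/\rho_j$ is exactly the gradient of the proportionally fair objective $\sum_{j\in\mJ}q_j\log s_j$ at $s=\rho$. Differentiating along $\dot q_j=\bar a_j-\sigma^*_j(q)$, inserting the variational inequality, and using $\bar a_j=\rho_j/(1+\epsilon)$ gives $\frac{dL}{dt}\le -\frac{\epsilon}{1+\epsilon}||q||_1$, the desired uniform negative drift.

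The hard part is the routing term. With $\dot q_j=a_j+\sum_{l\in\mJ}\sigma^*_l(q)p_{lj}-\sigma^*_j(q)$, the traffic equation $a=(I-P^\T)\bar a$ lets one rewrite the drift as $\dot q=(I-P^\T)\big(\bar a-\sigma^*(q)\big)$, so that $\frac{dL}{dt}=\big[(I-P)\nabla L(q)\big]\cdot\big(\bar a-\sigma^*(q)\big)$. To reuse the variational inequality I would need $(I-P)\nabla L(q)$ proportional to $(q_j/\rho_j)_{j}$; for a quadratic $L=\tfrac12 q^\T W q$ this forces $W=c\,(I-P)^{-1}\mathrm{diag}(\rho_j)^{-1}$, which is symmetric only when $P$ is reversible with respect to the loads, i.e. $\rho_l p_{lj}=\rho_j p_{jl}$. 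Since general Markovian routing is not reversible, no quadratic Lyapunov function works, and this detailed-balance failure is precisely the obstacle. I would resolve it by replacing the quadratic with a free-energy (entropy) functional attached to the product-form stationary law of the associated network of quasi-reversible processor-sharing queues, the object that remains valid without detailed balance, and show that the proportionally fair rates make this functional decrease at a rate bounded away from zero whenever $q\neq 0$. With such a drift bound in hand, the finite-time absorption at the origin and its uniformity over $||q(0)||_1=1$ follow verbatim as in Proposition \ref{FluidStabProof}.
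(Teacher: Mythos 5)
Your proposal stops exactly where the theorem begins. The no-routing warm-up is fine (it is the $\alpha=1$, $g_j=\log$ case of Proposition \ref{FluidStabProof}; note only that the inequality you actually need there is $\sum_j q_j\sigma^*_j(q)/\rho_j\geq\sum_j q_j$, i.e.\ the gradient of the objective evaluated at $\rho$ as in \eqref{rhocond}, which follows from your stated variational inequality at $\sigma^*(q)$ only after an extra Cauchy--Schwarz step), and your observation that non-reversible routing rules out any quadratic Lyapunov function is correct and nicely explains why the theorem is hard. But for general sub-stochastic $P$ your entire argument is the sentence ``I would resolve it by replacing the quadratic with a free-energy (entropy) functional \ldots and show that the proportionally fair rates make this functional decrease at a rate bounded away from zero.'' That sentence restates the theorem as a task rather than proving it: the functional is never written down, its derivative along \eqref{Fluid2} is never computed, and the uniform negative drift is never established. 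Essentially all of the content of Massouli\'e's result lives in precisely those three steps.

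To see concretely what is missing, compare with the direct entropy argument the paper gives for the fixed-routing analogue (the Second Proof of Theorem \ref{MainThrm2}). There one needs: (i) an explicit functional, $H(x)=\sum_{r}\sum_{j\in r}x_{jr}\log\big(x_{jr}\sigma^*_j(q)/(q_j\bar{a}_r)\big)$; (ii) an envelope-type lemma (Lemma \ref{PartialH}) showing that the $q$-dependence of $\sigma^*(q)$ does not contribute to $\partial H/\partial x_{jr}$ --- this requires two-sided suboptimality bounds and the continuity of $q\mapsto\sigma^*(q)$, and it is the step where the ``properties of the proportionally fair optimization'' actually enter; (iii) Pinsker's inequality (Lemma \ref{pinklem}) to convert the resulting relative-entropy drift into a quadratic lower bound; and (iv) a compactness/Pareto-optimality argument (Lemma \ref{LyuEpsLem}), using that $\bar{a}$ is interior to $\coS$ while $\sigma^*(q)$ lies on its boundary, to make the drift $\leq-\epsilon$ uniformly, which is what yields a hitting time uniform over $||q(0)||_1=1$. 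None of these ingredients, nor any Markovian-routing substitute for them (where the drift computation must additionally absorb the cross terms $\sum_{l}\sigma^*_l(q)p_{lj}$), appears in your proposal. Note finally that the paper itself never proves this statement: it imports it from Massouli\'e \cite{Ma07} as a black box and proves Theorem \ref{MainThrm2} by reducing the multiclass fixed-routing model to it via Lemma \ref{LogLem}. So the burden you took on is exactly the hard external result, and the proposal does not discharge it.
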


Since in Massouli\'e's model each packet leaving a link must have the same routing behavior, it is not immediately clear that the result extends to multiclass queueing networks where routing depends on a packet's route-class. However, the following lemma makes this possible
\begin{lemma}\label{LogLem}
For $j\in\mJ$, take positive vector $x^j=(x_{jr}: r\in \mR, j\in r)$ and positive constant $\sigma_j$. Let $q_j$ be the sum of the components of $x^j$. The following holds
\begin{align}
&q_j\log \sigma_j + \sum_{r: j\in r} x_{jr} \log \frac{x_{jr}}{q_j} \label{LogShare}\\
= &\max_{\gamma^j\geq 0} \; \sum_{r: j\in r} x_{jr} \log \gamma_{jr} \; \text{s.t.} \quad \sum_{j \in r} \gamma_{jr} = \sigma_j
\end{align}
and the above optimization is solved by $\gamma_{jr} = \frac{x_{jr}\sigma_j}{q_j}$.
\end{lemma}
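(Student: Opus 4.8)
The plan is to recognize the right-hand side as a strictly concave maximization over a simplex-type constraint and to solve it exactly. One can do this by a routine Lagrange-multiplier computation, but — in keeping with the entropy theme advertised in the abstract — I would instead reduce it to Gibbs' inequality, which makes the role of the logarithm transparent.

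First I would normalize both the data and the optimization variables into probability vectors. Since $q_j = \sum_{r: j \in r} x_{jr}$, the quantities $p_r := x_{jr}/q_j$ form a probability distribution over the routes through $j$; likewise, any feasible $\gamma^j \geq 0$ with $\sum_{r: j\in r} \gamma_{jr} = \sigma_j$ yields a probability distribution $\tilde{\gamma}_r := \gamma_{jr}/\sigma_j$. Substituting $x_{jr} = q_j p_r$ and $\gamma_{jr} = \sigma_j \tilde\gamma_r$ into the objective and expanding the logarithm (using $\sum_r p_r = 1$) turns it into
\[
\sum_{r: j \in r} x_{jr} \log \gamma_{jr} = q_j \log \sigma_j + q_j \sum_{r: j \in r} p_r \log \tilde{\gamma}_r .
\]

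Next I would apply Gibbs' inequality, equivalently the nonnegativity of the Kullback--Leibler divergence between $p$ and $\tilde\gamma$, which gives $\sum_r p_r \log \tilde\gamma_r \le \sum_r p_r \log p_r$ with equality if and only if $\tilde\gamma_r = p_r$ for every $r$. Undoing the normalization, the maximizing choice is $\gamma_{jr} = \sigma_j p_r = x_{jr}\sigma_j/q_j$, and the maximal value becomes $q_j \log \sigma_j + q_j \sum_r p_r \log p_r = q_j \log \sigma_j + \sum_{r: j \in r} x_{jr}\log(x_{jr}/q_j)$, which is exactly the left-hand side of the claimed identity.

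There is no genuine obstacle here: the objective is strictly concave on the positive orthant and the feasible region is an affine slice, so the first-order conditions are necessary and sufficient and strict concavity secures uniqueness of the maximizer. The only points deserving a word of care are that all $x_{jr}$ are assumed strictly positive, so that $p_r > 0$ and every logarithm is finite, and that the maximum is attained in the interior of the feasible simplex, so that boundary cases with some $\gamma_{jr}\to 0$ (which send the objective to $-\infty$) may be discarded at the outset; this also shows the nonnegativity constraints are inactive at optimality. A reader preferring a direct verification may instead form the Lagrangian $\sum_r x_{jr}\log\gamma_{jr} - \lambda(\sum_r \gamma_{jr} - \sigma_j)$, solve $x_{jr}/\gamma_{jr} = \lambda$ to get $\gamma_{jr} = x_{jr}/\lambda$, and fix $\lambda = q_j/\sigma_j$ from the constraint, recovering the same maximizer and, upon back-substitution, the same identity.
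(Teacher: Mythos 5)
Your proof is correct, but there is nothing in the paper to compare it against: the author states only that ``the lemma is left as an exercise,'' so any complete argument fills a genuine gap rather than paralleling an existing one. Your route via Gibbs' inequality is a natural choice and is well matched to how the lemma is used later in the paper: writing the objective as $q_j\log\sigma_j + q_j\sum_r p_r\log\tilde\gamma_r$ with $p_r = x_{jr}/q_j$, $\tilde\gamma_r=\gamma_{jr}/\sigma_j$ exhibits the gap to optimality as $-q_j D(p\,\|\,\tilde\gamma)$, which is exactly the kind of relative-entropy bookkeeping the author employs in Lemmas \ref{PosH} and \ref{pinklem} for the second proof of Theorem \ref{MainThrm2}. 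Your attention to the boundary (terms with $\gamma_{jr}\to 0$ drive the objective to $-\infty$ since $x_{jr}>0$) and the remark that strict concavity gives uniqueness are the only delicate points, and you handle both; the Lagrangian computation you sketch ($\gamma_{jr}=x_{jr}/\lambda$, $\lambda = q_j/\sigma_j$) is the routine alternative and agrees. One small editorial point: the constraint in the lemma as printed reads $\sum_{j\in r}\gamma_{jr}=\sigma_j$, which is evidently a typo for $\sum_{r:\, j\in r}\gamma_{jr}=\sigma_j$; you silently (and correctly) read it the intended way, but it is worth flagging, since the identity is false under the literal reading.
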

The lemma is left as an exercise. But what the lemma says is that if packets $x^j=(x_{jr}: r\in \mR, j\in r)$ are sharing the resource $j$ then under a proportional fair optimization they share it proportionately and, perhaps more importantly, the equality \eqref{LogShare} shows that the form of the logarithm is the optimization remains unchanged after this optimization step. Thus, to solve a proportionally fair optimization given $x_{jr}$, we can solve the proportionally fair optimization given $q_j$, i.e. optimizing the $q_j\log \sigma_j$ terms in \eqref{LogShare}, and then assign rates according to a processor sharing discipline: $\gamma_{jr} = \frac{x_{jr}\sigma_j}{q_j}$. This describes the per-link queueing mechanism, PS2, described in Section \ref{MH-PF-sec}, which possesses significant structural simplification when compared to other policies known to be maximum stable for multi-hop routed switch networks.

\begin{proof}[Proof of Theorem \ref{MainThrm2}]
As a consequence of Lemma \ref{LogLem}, we have that the optimization
\begin{subequations}\label{PFopty1}
\begin{align}
&\text{maximize}  \quad \sum_{j\in\mJ} \sum_{r\in\mR} x_{jr} \log \gamma_{jr} \\
 &\text{subject to } \sum_{r: j \in r} \gamma_{jr} = \sigma_j, \; j\in\mJ,\\
&\text{over}\qquad\qquad \sigma \in \coS,
\end{align}
\end{subequations}
is equivalent to solving the optimization
\begin{subequations}\label{PFopty2}
\begin{align}
&\text{maximize}  \quad \sum_{j\in\mJ} q_{j} \log \sigma_{j} \\
&\text{over}\qquad\qquad \sigma \in \coS,
\end{align}
\end{subequations}
and setting $\gamma^*_{jr}(x)= x_{jr}\sigma^*_j(q)/q_j$ where $\sigma^*_j(q)$ solves the above proportionally fair optimization.

Consider process $x(t)=(x_{jr}(t): r\in\mR, j\in r)$, a Massouli\'e network of the type \eqref{Fluid2} where classes are now indexed by $(j,r)$. That is the stations of the network are no longer indexed by $j$, but by pairs class $(j,r)$. After completeing service at $(j,r)$ packets next join $(j^+_r,r)$ with probability one. This system obeys the fluid equations\footnote{Here $\gamma^{*}_{j^r_- r} (x(t))=\bar{a}_r$ when $j$ is the first queue on route $r$.}
\begin{align}
&\frac{d x_{jr}}{dt} =
\gamma^{*}_{j^r_- r} (x(t)) -  \gamma^{*}_{jr}(x(t)).
\end{align}
These equations are exactly of the form \eqref{Fluid2}. Thus, by Massouli\'e \cite{Ma07}, this system of fluid equations is stable. However, the equivalence of the optimizations \eqref{PFopty1} and \eqref{PFopty2} shows that this fluid system is the the same as our proportionally fair fluid model:
\begin{align*}
&\frac{d x_{jr}}{dt} =
\frac{x_{j^r_- r}(t)}{q_{j^r_-}(t)} \sigma^{*}_{j^r_-} (q(t)) - \frac{x_{j r}(t)}{q_{j}(t)} \sigma^{*}_{j}(q(t)).
\end{align*}
Thus the stability of our fluid model now follows from Massouli\'e \cite{Ma07}.
\end{proof}
The observation of Lemma \ref{LogLem} and its implications to multi-hop switch systems is critical.
After observing Lemma \ref{LogLem}, the reduction to Massouli\'e's proof is straightforward. However, to this author at least, it was surprising that the multi-class system can be reduced to a single-class system in this way. For instance, it is not immediately clear that one can reduce stationarity results on BCMP-Kelly networks as a direct reduction from results on Jackson networks. Nonetheless, the fluid stability proof of Massouli\'e is not so straightforward. A second, direct proof can argued from Bramson \cite{Br96}. As a brief aside: with Lemma \ref{LogLem} and Lemma \ref{PartialH} (below),  the fluid stability result of Bramson for processor-sharing networks \cite{Br96} and the fluid result of Massouli\'e for bandwidth networks \cite{Ma07} have some equivalence. However, such equivalences are, perhaps, less striking than their resulting implications for switch systems. We now see that, for the proportionally fair case, stability of a single-hop queueing network also implies stability of its multi-hop routed counterpart. Moreover, the queueing structure of this system is far simpler than BackPressure, the canonical class of multi-hop throughput optimal switch policies.

\section{Second Proof}

Bramson considers the fluid model of a multiclass processor sharing queueing network. Here the service rate at queues $\sigma\in\bR_+^\mJ$ is not a function of queue sizes. So, for $x_{jr}(t)>0$, the fluid model of Bramson is as follows
\begin{align*}
&\frac{d x_{jr}}{dt} = \frac{x_{j^r_- r}(t)}{q_{j^r_-}(t)} \sigma_{j^r_-}  - \frac{x_{j r}(t)}{q_{j}(t)} \sigma_{j}.
\end{align*}
Bramson considers a Lyapunov function
\begin{equation}
\hat{H}(x) =  \sum_{r\in\mR} \sum_{j\in r} x_{jr} \log \left( \frac{x_{jr}\sigma_j}{q_j\bar{a}_r} \right).
\end{equation}
This is large deviations rate function associated with a network of processor sharing queues.
Informally speaking, his proof relies on the fact that the partial derivatives of this function are, for $x_{jr}>0$,
\begin{equation}
 \frac{\partial \hat{H}}{\partial x_{jr}} =\log \left( \frac{x_{jr}\sigma_j}{q_j\bar{a}_r} \right).
\end{equation}
 If we now replace $\sigma$ with the proportionally fair solution $\sigma^*(q)$ and consider the following Lyapunov function
\begin{equation}
H(x) =  \sum_{r\in\mR} \sum_{j\in r} x_{jr} \log \left( \frac{x_{jr}\sigma^*_j(q)}{q_j\bar{a}_r} \right),
\end{equation}
due to the properties of the proportionally fair optimization, the partial derivatives of this optimization remain the same. This is demonstrated in the following lemma.
\begin{lemma}\label{PartialH} For $x_{jr}>0$
\begin{equation}
\frac{\partial H}{\partial x_{jr}} =  \log \left( \frac{x_{jr}\sigma^*_j(q)}{q_j\bar{a}_r} \right).
\end{equation}
\end{lemma}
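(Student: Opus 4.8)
The plan is to differentiate $H$ directly, keeping track of the fact that each $x_{jr}$ enters $H$ in three ways: explicitly, through the aggregate $q_j=\sum_{r:j\in r}x_{jr}$, and through the optimizer $\sigma^*(q)$, which depends on the \emph{whole} vector $q$. Using $\sum_{r:k\in r}x_{kr}=q_k$ to collapse the link-indexed sums, I would write $H(x)=A-B+C-D$ with $A=\sum_{r'}\sum_{k\in r'}x_{kr'}\log x_{kr'}$, $B=\sum_{k\in\mJ}q_k\log q_k$, $C=\sum_{k\in\mJ}q_k\log\sigma^*_k(q)$, and $D=\sum_{r'}\log\bar{a}_{r'}\sum_{k\in r'}x_{kr'}$. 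Fix a link $j$ and a route $r\ni j$ with $x_{jr}>0$. The first, second, and fourth pieces are elementary: since perturbing $x_{jr}$ moves only $q_j$, and at unit rate, one gets $\parti{A}{x_{jr}}=\log x_{jr}+1$, $\parti{B}{x_{jr}}=\log q_j+1$, and $\parti{D}{x_{jr}}=\log\bar{a}_r$. Collecting these already assembles $\log\!\big(x_{jr}/(q_j\bar{a}_r)\big)$, i.e. the target value minus $\log\sigma^*_j(q)$.

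The only nontrivial piece is $C$. Because $\sigma^*$ depends on $x_{jr}$ solely through $q_j$, the chain rule gives
\[
\parti{C}{x_{jr}}=\parti{}{q_j}\Big(\sum_{k\in\mJ}q_k\log\sigma^*_k(q)\Big)=\log\sigma^*_j(q)+E,\qquad E:=\sum_{k\in\mJ}\frac{q_k}{\sigma^*_k(q)}\parti{\sigma^*_k(q)}{q_j}.
\]
Here $E$ is the \emph{indirect} effect of perturbing $q_j$ on the optimizer. Assembling the four contributions yields $\parti{H}{x_{jr}}=\log\!\big(x_{jr}\sigma^*_j(q)/(q_j\bar{a}_r)\big)+E$, so the lemma reduces entirely to the claim that $E=0$.

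To kill $E$, the cleanest route is an envelope argument. Let $V(q):=\max_{s\in\coS}\sum_{k\in\mJ}q_k\log s_k=\sum_{k\in\mJ}q_k\log\sigma^*_k(q)$. The maximand $\sum_k q_k\log s_k$ has $q_j$-partial derivative $\log s_j$, so the envelope theorem gives $\parti{V}{q_j}=\log\sigma^*_j(q)$; differentiating the expression $V=\sum_k q_k\log\sigma^*_k(q)$ directly instead gives $\log\sigma^*_j(q)+E$. Comparing the two yields $E=0$. A self-contained substitute, avoiding the envelope theorem, uses the first-order optimality of the strictly concave program: $\sigma^*(q)$ is the unique maximizer and hence satisfies the variational inequality $\sum_{k}\tfrac{q_k}{\sigma^*_k(q)}\big(s_k-\sigma^*_k(q)\big)\le 0$ for all $s\in\coS$; feeding in $s=\sigma^*(q+he_j)\in\coS$, dividing by $h$, and sending $h\to 0^+$ and $h\to 0^-$ (both admissible since $x_{jr}>0$ forces $q_j>0$) pinches $E$ between $\le 0$ and $\ge 0$.

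The main obstacle is not the derivative computation but the regularity it presumes, namely differentiability of $q\mapsto\sigma^*(q)$. Strict concavity of the logarithm guarantees that $\sigma^*(q)$ is the unique maximizer and varies continuously with $q$; on regions where the active face of the polytope $\coS$ is locally constant it is in fact smooth, so the chain-rule manipulations above are valid for almost every $q$, which is all the fluid analysis requires. I would therefore flag the boundary cases, where $\sigma^*(q)$ meets $\partial\coS$, as the one delicate point, and note that the variational-inequality pinching above is robust to them precisely because $\sigma^*(q+he_j)$ remains feasible for $h$ of either sign.
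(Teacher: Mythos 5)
Your decomposition $H=A-B+C-D$ and the elementary derivatives of $A$, $B$, $D$ are correct, and your instinct that the whole lemma reduces to an envelope statement for the value function $C(x)=V(q)=\sum_k q_k\log\sigma^*_k(q)$ is exactly right. But the execution has a genuine gap: both the chain-rule expansion $\parti{C}{x_{jr}}=\log\sigma^*_j(q)+E$ and the quantity $E=\sum_k \tfrac{q_k}{\sigma^*_k(q)}\parti{\sigma^*_k(q)}{q_j}$ presuppose that $q\mapsto\sigma^*(q)$ is differentiable at $q$. That is precisely what fails at the points you flag (where the optimal face of the polytope $\coS$ changes), and at those points $E$ is simply undefined, so there is nothing for your variational-inequality argument to pinch: feasibility of $\sigma^*(q+he_j)$ for both signs of $h$ does not make the limits of $\bigl(\sigma^*_k(q+he_j)-\sigma^*_k(q)\bigr)/h$ exist. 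Your fallback --- that the identity holds for almost every $q$ and that this is all the fluid analysis requires --- both weakens the lemma (which is asserted for \emph{every} $x$ with $x_{jr}>0$) and is not clearly sufficient: the Lyapunov computation differentiates $H$ along a fluid trajectory, and an absolutely continuous trajectory can spend positive time inside a Lebesgue-null set of states, where you could not even write $\tfrac{dH}{dt}=\sum_{j,r} \dot{x}_{jr}\parti{H}{x_{jr}}$.

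The repair is to never differentiate $\sigma^*$ at all, and your own one-sided inequalities already contain it if you aim them at $V$ rather than at $E$. By suboptimality of $\sigma^*(q)$ for the parameter $q+he_j$, and of $\sigma^*(q+he_j)$ for the parameter $q$, one has for any $h$
\begin{align*}
V(q+he_j)-V(q) &\geq \sum_k (q_k+h\delta_{kj})\log\sigma^*_k(q)-\sum_k q_k\log\sigma^*_k(q) = h\log\sigma^*_j(q),\\
V(q+he_j)-V(q) &\leq h\log\sigma^*_j(q+he_j),
\end{align*}
so dividing by $h$ (inequalities reversing for $h<0$) traps the difference quotient of $V$ between $\log\sigma^*_j(q)$ and $\log\sigma^*_j(q+he_j)$; then \emph{continuity} of $\sigma^*_j$ --- not differentiability, see \cite[Lemma A.3]{KeWi04} --- gives $\parti{V}{q_j}=\log\sigma^*_j(q)$ at every point with $q_j>0$. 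This Danskin-type pinching is exactly the paper's own proof, carried out there on the full difference quotient of $H$ rather than on the isolated term $C$; your decomposition makes the bookkeeping lighter, but the two-sided suboptimality bound plus continuity is the step you cannot do without.
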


\begin{proof}[Proof of Lemma \ref{PartialH}]
We take the partial derivative of $H(x)$ with respect to $x_{jr}$ from first principles.
For $h>0$, define $x^h_{jr}= x_{jr} + h$ and $x^h_{j'r'}= x_{j'r'}$ for $r'\neq r$ and $j'\neq j$. Then
\begin{align*}
&\frac{H(x^h)- H(x)}{h}\\
 =& \frac{1}{h}
\bigg[
\sum_{r\in\mR} \sum_{j\in r} x^h_{jr} \log \left( \frac{x^h_{jr}\sigma^*_j(q^h)}{q^h_j\bar{a}_r} \right)\\
&\qquad\qquad -
\sum_{r\in\mR} \sum_{j\in r} x_{jr} \log \left( \frac{x_{jr}\sigma^*_j(q)}{q_j\bar{a}_r} \right)
  \bigg]\\
\geq &
 \frac{1}{h}
\bigg[
\sum_{r\in\mR} \sum_{j\in r}x^h_{jr} \log \left( \frac{x^h_{jr}\sigma^*_j(q)}{q^h_j\bar{a}_r} \right)\\
&\qquad\qquad -\sum_{r\in\mR} \sum_{j\in r} x_{jr} \log \left( \frac{x_{jr}\sigma^*_j(q)}{q_j\bar{a}_r} \right)
  \bigg]\\
=& \log\left(\frac{\sigma^*_{j}(q)}{\bar{a}_r}\right)  +  \frac{1}{h} \left( x^h_{jr} \log x^h_{jr} - x_{jr} \log x_{jr} \right) \\
&\quad + \frac{1}{h} \left( q^h_{j} \log q^h_{j} - q_{j} \log q_{j} \right).
\end{align*}
In the inequality above, we use the fact ${\sigma}^*_{j}(q)$ is suboptimal for proportional fair problem with parameter choice $q^h$.
Applying the same argument but instead observing that $\sigma^*_{j}(q^h)$ is suboptimal
for parameter choice $q$, we can also derive the bound
\begin{align}
&\frac{H(x^h)- H(x)}{h} \notag\\
 \leq & \log\left(\frac{\sigma^*_{j}(q^h)}{\bar{a}_r}\right)  +  \frac{1}{h} \left( x^h_{jr} \log x^h_{jr} - x_{jr} \log x_{jr} \right) \\
&\quad + \frac{1}{h} \left( q^h_{j} \log q^h_{j} - q_{j} \log q_{j} \right). \label{PFDIff}
\end{align}
We remark that these same two bounds can also be derived in the case $h<0$.
The function $q \mapsto \sigma_{j}(q)$ is a continuous function for $q_j>0$. This is proven in \cite[Lemma A.3]{KeWi04}.
Thus taking the limit as $h\rightarrow 0$, both of these bounds together imply
\begin{align*}
\frac{\partial H(x)}{\partial x_{jr}} &= \log\left(\frac{\sigma^*_{j}(q)}{\bar{a}_r}\right) + \log x_{jr} - \log q_j,
\end{align*}
as required.
\end{proof}

With the addition of a few technical lemmas, we can now give a direct proof of Theorem \ref{MainThrm2}. These additional technical lemmas are stated and proven after the proof of our main result.

\begin{proof}[Proof of Theorem \ref{MainThrm2}]
It is proven in Lemma \ref{PosH} that $H(x)$ is positive and is minimized when $x=0$. For $x_{jr}>0$,
 it is proven in Lemma \ref{PartialH} that the partial derivatives of $H(x)$ with respect to $x_{jr}$ are
\begin{equation}\label{HHPartial}
\frac{\partial H}{\partial x_{jr}} =  \log \left( \frac{x_{jr}\sigma^*_j(q)}{q_j\bar{a}_r} \right).
\end{equation}
Thus we can see that the following equalities hold
\begin{align}
&\frac{d H}{dt} = \sum_{r, j\in r} \frac{d x_{jr}}{dt} \frac{\partial H}{\partial x_{jr}}\\
=& \sum_{r, j\in r} \left( \frac{x_{j^r_- r}}{q_{j^r_-}} \sigma^{*}_{j_-^r} (q) - \frac{x_{j r}}{q_{j}} \sigma^{*}_{j}(q) \right) \log \left( \frac{x_{jr}\sigma^*_j(q)}{q_j\bar{a}_r} \right) \label{Hdiffy0} \\
=&
-
\sum_{r\in\mR} \bar{a}_r \sum_{j\in r}
	 \frac{x_{j r}\sigma^{*}_{j}(q) }{q_{j}\bar{a}_r}
		\log
			\left(
				\frac{
					\big[\frac{x_{jr}\sigma^*_j(q)}{q_j\bar{a}_r}\big]
				}{
					\Big[\frac{x_{j^r_+ r}\sigma^*_{j^r_+}(q)}{q_{j^r_+}\bar{a}_r}\Big]
				}
			\right). \label{Hdiffy}
\end{align}
The first we apply chain rule. For \eqref{Hdiffy0}, we substitute \eqref{HHPartial} and the definition of our fluid model \eqref{hopfluid}.
To derive \eqref{Hdiffy}, we increment the first terms of summation \eqref{Hdiffy0} so that the coefficients of the logarithms are equal. After multiplying and dividing by $\bar{a}_r$ we gain expression \eqref{Hdiffy}.  Notice in the first equality, we include the term $j=j_0^r$. Since applying convention \eqref{Convension} we have, $ \log ( (x_{jr}\sigma^*_j(q_j))/(q_j\bar{a}_r) )=\log 1 =0$.

The following sequence of inequalities hold, for $q\neq 0$
\begin{align*}
\frac{d H}{dt} \leq  - \sum_{r\in\mR} \Bigg[ \bar{a}_r &  \bigg(\sum_{j\in r} \frac{x_{jr}\sigma^*_{j}(q)}{q_{j}\bar{a}_r}\bigg)^{-1} \\
& \times \sum_{j\in r} \bigg( \frac{x_{j r}\sigma^*_{j^r}(q)}{q_{j^r}\bar{a}_r}-  \frac{x_{j^r_+ r}\sigma^*_{j^r_+}(q)}{q_{j^r_+}\bar{a}_r}  \bigg)^2 \Bigg]\\
\leq - \sum_{r\in\mR} \Bigg[ \bar{a}_r &  \bigg(\frac{\sigma_{max}}{\bar{a}_r} \bigg)^{-1} \\
& \times \sum_{j\in r} \bigg( \frac{x_{jr}\sigma^*_{j}(q)}{q_{j}\bar{a}_r}-  \frac{x_{j^r_+ r}\sigma^*_{j^r_+}(q)}{q_{j^r_+}\bar{a}_r}  \bigg)^2 \Bigg]\\
 \leq -\epsilon. \qquad\quad&
\end{align*}
In the first inequality, we apply entropy bound Lemma \ref{pinklem} to \eqref{Hdiffy}. In the second inequality,
we apply the bound
\begin{equation}
\sum_{j\in r} \frac{x_{jr}\sigma^*_{j}(q)}{q_{j}\bar{a}_r} \leq \frac{\sigma_{max}}{\bar{a}_r}.
\end{equation}
In the third inequality, we apply the Lemma \ref{LyuEpsLem}. This then shows that our Lyapunov function has strictly negative drift.

By Lemma \ref{PosH}, $H(x)=0$ only if $x=0$ and $H(x)$ is bounded when $||x||_1=1$ . Thus, for all $x(0)$ such that $H(x(0)) \leq h$ for some constant $h>0$ we have for all $t \geq h\epsilon^{-1}$
\begin{equation}
q(t) = 0.
\end{equation}
Thus, fluid stability holds.
\end{proof}

We now prove the lemmas required in the above proof.
For two probability distributions $p$ and $q$ defined on the same finite set $\mX$, the relative entropy between  $p$ and $q$  is defined to be
\begin{equation*}
D(p || q) = \sum_{x\in\mX} p_x \log \frac{p_x}{q_x}.
\end{equation*}
The following bound on $D(p||q)$ holds

\begin{lemma}[Pinsker's Inequality]
\begin{equation*}
 \sqrt{D(p||q)} \geq \sum_{x\in\mX} | p_x - q_x|.
\end{equation*}
\end{lemma}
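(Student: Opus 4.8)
The plan is to prove Pinsker's inequality by the standard two-step reduction: first collapse the finite alphabet $\mX$ to a single binary comparison, and then settle the resulting scalar inequality by elementary calculus. Let $A=\{x\in\mX : p_x\ge q_x\}$ and set $\bar p=\sum_{x\in A}p_x$ and $\bar q=\sum_{x\in A}q_x$. First I would observe that the $L_1$ distance is exactly realised on this set: since $\sum_x p_x=\sum_x q_x=1$, the positive and negative parts of $p_x-q_x$ carry equal mass, so $\sum_{x\in\mX}|p_x-q_x|=2\sum_{x\in A}(p_x-q_x)=2(\bar p-\bar q)$.

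Second, I would bound $D(p||q)$ below by the binary relative entropy $d(\bar p||\bar q):=\bar p\log\tfrac{\bar p}{\bar q}+(1-\bar p)\log\tfrac{1-\bar p}{1-\bar q}$. This is precisely the log-sum inequality $\sum_i a_i\log\tfrac{a_i}{b_i}\ge\big(\sum_i a_i\big)\log\tfrac{\sum_i a_i}{\sum_i b_i}$ applied to the two blocks $A$ and $\mX\setminus A$: grouping the summands of $D(p||q)$ over each block gives $D(p||q)\ge d(\bar p||\bar q)$. Conceptually this is the data-processing step, namely that coarsening a partition cannot increase divergence.

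Third, the problem reduces to the scalar inequality $d(\bar p||\bar q)\ge 2(\bar p-\bar q)^2$ for $0\le\bar q\le\bar p\le 1$. I would prove this by fixing $\bar p$ and putting $g(\bar q)=d(\bar p||\bar q)-2(\bar p-\bar q)^2$. Then $g(\bar p)=0$ and a direct computation gives $\tfrac{d g}{d\bar q}=(\bar p-\bar q)\big(4-\tfrac{1}{\bar q(1-\bar q)}\big)$, which is $\le 0$ on $[0,\bar p]$ because $\bar q(1-\bar q)\le\tfrac14$ forces the bracket to be nonpositive while $\bar p-\bar q\ge 0$. Hence $g$ is non-increasing in $\bar q$, so $g(\bar q)\ge g(\bar p)=0$. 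Combining the three steps yields $D(p||q)\ge 2(\bar p-\bar q)^2=\tfrac12\big(\sum_x|p_x-q_x|\big)^2$, i.e. $\sqrt{2\,D(p||q)}\ge\sum_x|p_x-q_x|$, which is Pinsker's inequality in its sharp-constant form (the constant matters only up to an absorbing factor in the drift estimate where the lemma is invoked).

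The main obstacle is the reduction in the second step: one must justify simultaneously that collapsing to the binary partition only decreases the divergence while preserving the $L_1$ distance exactly. The cleanest route is the log-sum inequality, whose own proof is a one-line convexity argument (Jensen applied to the convex map $t\mapsto t\log t$); everything else is the self-contained scalar estimate above, whose tight constant originates solely from $\bar q(1-\bar q)\le 1/4$.
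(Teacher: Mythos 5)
Your proposal is correct, and it does strictly more than the paper, which offers no proof of this lemma at all --- the paper simply cites Cover and Thomas. Your route is the standard self-contained one: collapse to the binary partition $\{A,\mX\setminus A\}$ via the log-sum inequality (which can only decrease the divergence while preserving the $L_1$ distance exactly, since $\sum_x p_x=\sum_x q_x$ forces $\sum_x|p_x-q_x|=2(\bar p-\bar q)$), then settle the scalar bound $d(\bar p\|\bar q)\ge 2(\bar p-\bar q)^2$ by calculus; your derivative $(\bar p-\bar q)\bigl(4-\tfrac{1}{\bar q(1-\bar q)}\bigr)$ and the use of $\bar q(1-\bar q)\le\tfrac14$ are exactly right. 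The one point you only half-flag deserves emphasis: what you prove is the sharp form $\sqrt{2\,D(p\|q)}\ge\sum_{x\in\mX}|p_x-q_x|$, and this is \emph{not} the inequality displayed in the lemma. The displayed form, missing the factor $\sqrt{2}$, is in fact false as stated: take $p=(\tfrac12+\epsilon,\tfrac12-\epsilon)$ and $q=(\tfrac12,\tfrac12)$, for which $D(p\|q)=2\epsilon^2+O(\epsilon^3)$ while $\sum_x|p_x-q_x|=2\epsilon$, so $\sqrt{D(p\|q)}\approx\sqrt{2}\,\epsilon<2\epsilon$. So no proof of the literal statement can exist; the lemma should be restated as $\sqrt{2\,D(p\|q)}\ge\sum_x|p_x-q_x|$, equivalently $D(p\|q)\ge\tfrac12\bigl(\sum_x|p_x-q_x|\bigr)^2$. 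As you observe, this correction is harmless downstream: carrying the extra factor $\tfrac12$ through Lemma \ref{pinklem} and into the drift computation for $H$ merely rescales the negative drift constant, so the conclusion of Theorem \ref{MainThrm2} is unaffected.
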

A proof of this bound is found in Cover and Thomas \cite{Co91}. Relative entropy is positive continuous in $p$ and minimized when $p=q$. This gives the following.

\begin{lemma}\label{PosH}
The function $H(x)$ is positive, is bounded when $||x||_1=1$ and is minimized when $x=0$.
\end{lemma}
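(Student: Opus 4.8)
The plan is to exhibit $H$ as a sum of nonnegative relative-entropy terms plus the optimality gap of the proportional-fair allocation, from which positivity and the location of the minimizer follow at once, and then to bound each piece crudely on the unit sphere. Throughout I write $y_r := \sum_{j\in r} x_{jr}$ for the total route-$r$ content and $\bar{a}_j := \sum_{r:\,j\in r}\bar{a}_r$ for the load carried by link $j$; this is the quantity assumed to lie in the interior of $\coS$, and I assume each route rate $\bar{a}_r>0$.

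First I would split the logarithm in the definition of $H$ and regroup the double sum using $\sum_{r:\,j\in r} x_{jr} = q_j$ (this is \eqref{sumxq}) and $\sum_{j\in r} x_{jr}=y_r$. After collecting the $-\log\bar{a}_r$ contributions by link and inserting $\pm\,q_j\log\bar{a}_j$ into each link's term, the function rearranges into
\[
H(x) = \sum_{j\in\mJ} q_j\, D\big(p^j \,\big\|\, \pi^j\big) + \sum_{j\in\mJ} q_j\log\frac{\sigma^*_j(q)}{\bar{a}_j},
\]
where, for each link $j$ with $q_j>0$, the vector $p^j=(x_{jr}/q_j)_{r:\,j\in r}$ is the distribution of link-$j$ content across the routes through $j$, and $\pi^j=(\bar{a}_r/\bar{a}_j)_{r:\,j\in r}$ is the corresponding distribution of offered load. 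Both are genuine probability vectors, since their entries sum to $q_j/q_j=1$ and $\bar{a}_j/\bar{a}_j=1$ respectively. The key algebraic identity underlying this step is that $\sum_{r:\,j\in r} x_{jr}\log\big(x_{jr}/(q_j\bar{a}_r)\big)$ equals $q_j D(p^j\|\pi^j)-q_j\log\bar{a}_j$.

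Next I would argue that each summand is nonnegative, for two independent reasons. The relative-entropy terms satisfy $D(p^j\|\pi^j)\ge 0$ by Gibbs' inequality, which is precisely the nonnegativity underlying Pinsker's inequality already quoted. The second sum is nonnegative because $\sigma^*(q)$ maximizes $\sum_{j} q_j\log s_j$ over $\coS$ while, by hypothesis, $\bar{a}=(\bar{a}_j:j\in\mJ)\in\coS$; hence $\sum_j q_j\log\sigma^*_j(q)\ge \sum_j q_j\log\bar{a}_j$, i.e. the leftover term is exactly the proportional-fair optimality gap and is $\ge 0$. Consequently $H(x)\ge 0$, and since $H(0)=0$ under the convention $0\log 0=0$, the function is minimized at $x=0$. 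For boundedness on $\{\|x\|_1=1\}$ I would first note $\|q\|_1=\sum_j\sum_{r:\,j\in r}x_{jr}=\|x\|_1=1$ and likewise $\sum_r y_r=1$, then bound $H$ from above termwise in its split form: the negative-entropy terms obey $x_{jr}\log(x_{jr}/q_j)\le 0$; the service terms obey $\sum_j q_j\log\sigma^*_j(q)\le \|q\|_1\log^+\sigma_{\max}$ because $\sigma^*_j(q)\le\sigma_{\max}$ (here $\log^+ t:=\max\{\log t,0\}$); and $-\sum_r y_r\log\bar{a}_r\le \max_r\log^+(1/\bar{a}_r)$. These give a finite upper bound depending only on network constants, and together with $H\ge 0$ this establishes boundedness on the unit sphere.

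The hard part is the reorganization in the first step: it is not obvious a priori that inserting the link load $\bar{a}_j$ is the right move, but this is exactly what turns the mixed terms $\sum_{r:\,j\in r} x_{jr}\log\big(x_{jr}/(q_j\bar{a}_r)\big)$ into a bona fide relative entropy $q_j D(p^j\|\pi^j)$ while isolating $q_j\log(\sigma^*_j(q)/\bar{a}_j)$ as the optimality gap that the interior-of-$\coS$ assumption controls. The remaining care is bookkeeping for degenerate coordinates, namely links with $q_j=0$ or routes with $x_{jr}=0$, which are handled by the $0\log 0=0$ convention (their contributions simply drop), and the standing assumption $\bar{a}_r>0$ that keeps the reference distributions $\pi^j$ and the $\log\bar{a}_r$ bounds finite.
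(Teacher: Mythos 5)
Your proposal is correct and follows essentially the same route as the paper: the identical decomposition $H(x)=\sum_j q_j D\big((x_{jr}/q_j)_{r\ni j}\,\|\,(\bar{a}_r/\bar{a}_j)_{r\ni j}\big)+\sum_j q_j\log\big(\sigma_j^*(q)/\bar{a}_j\big)$, with nonnegativity from Gibbs' inequality plus the optimality of $\sigma^*(q)$ against $\bar{a}\in\coS$. The only (harmless) difference is cosmetic: you establish boundedness on $\{\|x\|_1=1\}$ by an explicit termwise bound using $\sigma^*_j(q)\le\sigma_{\max}$ and $\bar a_r>0$, whereas the paper simply appeals to continuity of the entropy expression on the unit sphere.
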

\begin{proof}
Observe that, $H(x)$ can be expressed as linear combination of relative entropy terms as follows
\begin{align}
H(x)= &\sum_{j\in\mJ} q_j D\left( \Big(\frac{x_{jr}}{q_j}\Big)_{r \ni j} \bigg| \bigg|\Big(\frac{\bar{a}_r}{a_j}\Big)_{r \ni j} \right) + \sum_{j\in\mJ} q_j \log \frac{\sigma_j^*(q)}{a_j} \notag \\
\geq &\sum_{j\in\mJ} q_j \log \frac{\sigma_j^*(q)}{a_j} \geq 0.
\end{align}
The first inequality follows by the positivity of the relative entropy. The second inequality follows by the optimality of $\sigma_j^*(q)$ and the fact the vector $(a_j: j\in\mJ)$ belongs to the interior of $\coS$. From the form of the entropy equality above, we see that $H(x)$ is continuous for $||x||_1=1$ and so is bounded. Further, we note that the inequalities above hold with equality iff $x = 0$.
\end{proof}

If we now do not assume that $p$ and $q$ are probability distributions but instead we assume that they are positive and sum to the same constant, then $D(p || q)$ is still well defined
the following is a consequence of Pinsker's Inequality.
\begin{lemma}\label{pinklem}
If $p$ and $q$ are two positive vectors components indexed by $\mX$ and with
\begin{equation}
\sum_{x\in\mX} p_x = \sum_{x\in\mX} q_x
\end{equation}
then
\begin{equation}
\sum_{x\in\mX} p_x \log \frac{p_x}{q_x} \geq \frac{1}{\sum_{x\in\mX} p_x}\cdot \sum_{x\in\mX} \left( p_x - q_x \right)^2.
\end{equation}
\end{lemma}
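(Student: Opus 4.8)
The plan is to reduce the claim to the ordinary Pinsker's inequality (as stated just above) by rescaling $p$ and $q$ to genuine probability distributions. Let $S=\sum_{x\in\mX} p_x = \sum_{x\in\mX} q_x$ denote the common total mass, and define $\bar p_x = p_x/S$ and $\bar q_x = q_x/S$. By hypothesis these are positive vectors summing to $1$, so Pinsker applies to them. The one observation that makes the reduction work is that the common scaling cancels inside the logarithm, $\log(\bar p_x/\bar q_x)=\log(p_x/q_x)$, and hence
\begin{equation*}
D(\bar p \,||\, \bar q) = \sum_{x\in\mX} \frac{p_x}{S}\log\frac{p_x}{q_x} = \frac{1}{S}\sum_{x\in\mX} p_x \log\frac{p_x}{q_x}.
\end{equation*}

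Next I would apply the stated Pinsker inequality to $\bar p$ and $\bar q$ and square it, giving $D(\bar p\,||\,\bar q)\geq \big(\sum_{x\in\mX}|\bar p_x-\bar q_x|\big)^2$. The final ingredient is the elementary norm comparison $\big(\sum_{x\in\mX}|v_x|\big)^2 \geq \sum_{x\in\mX} v_x^2$, valid for any real vector $v$ because expanding the left-hand side produces exactly the squared terms plus nonnegative cross terms. Applying this with $v=\bar p-\bar q$ yields
\begin{equation*}
\Big(\sum_{x\in\mX}|\bar p_x-\bar q_x|\Big)^2 \geq \sum_{x\in\mX}(\bar p_x-\bar q_x)^2 = \frac{1}{S^2}\sum_{x\in\mX}(p_x-q_x)^2.
\end{equation*}

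Chaining the three displays gives $\frac{1}{S}\sum_x p_x\log(p_x/q_x) = D(\bar p\,||\,\bar q) \geq \frac{1}{S^2}\sum_x (p_x-q_x)^2$, and multiplying through by $S=\sum_x p_x$ produces exactly the asserted bound. There is no real obstacle here: the argument is a normalization followed by two one-line inequalities. The only point deserving a moment's care is the direction of the $L_1$-to-$L_2$ passage — it goes the way we need precisely because we want to \emph{lower} bound the divergence by $\sum_x(p_x-q_x)^2$, and the $L_1$ norm dominates the $L_2$ norm, so Pinsker's $L_1$ estimate is already strong enough to imply the (weaker) $L_2$ estimate.
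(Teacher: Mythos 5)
Your proof is correct and is essentially identical to the paper's own: both normalize $p$ and $q$ by their common total mass, apply the stated form of Pinsker's inequality to the resulting probability vectors, and conclude with the elementary comparison $\bigl(\sum_{x}|v_x|\bigr)^2 \geq \sum_{x} v_x^2$ before undoing the scaling. The only cosmetic difference is the order in which the rescaling and the $L_1$-to-$L_2$ step appear in the chain of inequalities.
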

\begin{proof}
Define
\begin{equation}
\tilde{p}_x=\frac{p_x}{\sum_{y\in\mX} p_y} \quad \text{and}\quad \tilde{q}_x=\frac{q_x}{\sum_{y\in\mX} q_y}
\end{equation}
Then
\begin{align*}
\sum_{x\in\mX} p_x \log \frac{p_x}{q_x} &=D(\tilde{p} || \tilde{q}){\sum_{x\in\mX} p_x}  \\
\geq &  \left(  \sum_{x\in\mX} | \tilde{p}_x - \tilde{q}_x | \right)^2  {\sum_{x\in\mX} p_x}\\
= & \frac{1}{\sum_{x\in\mX} p_x}\left(  \sum_{x\in\mX} | {p}_x - {q}_x | \right)^2\\
\geq &\frac{1}{\sum_{x\in\mX} p_x}  \sum_{x\in\mX} ({p}_x - {q}_x )^2.
\end{align*}
In the first equality above, we apply the definition of $D(\tilde{p}||\tilde{q})$; in the second, we apply Pinsker\rq{}s Inequality; we then rearrange this term and bound to get the result.
\end{proof}

The follow lemma is used to prove that $H(t)$ has strictly negative drift.

\begin{lemma} \label{LyuEpsLem} There exists $\epsilon >0$ such that for $x\in\bR_+^{\mR \times \mJ}$
\begin{equation}\label{LyaEps}
\sum_{r\in\mR}\bar{a}_r\bigg(\frac{\sigma_{max}}{\bar{a}_r} \bigg)^{-1} \sum_{j\in r} \bigg( \frac{x_{jr}\sigma^*_{j}(q)}{q_{j}\bar{a}_r}-  \frac{x_{j^r_+r}\sigma^*_{j^r_+}(q)}{q_{j^r_+}\bar{a}_r}  \bigg)^2 > \epsilon.
\end{equation}
\end{lemma}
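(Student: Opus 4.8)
The plan is to bound the left-hand side of \eqref{LyaEps} below by a constant depending only on the network data, exploiting that $\bar a$ lies in the interior of $\coS$ whereas the proportionally fair allocation $\sigma^*(q)$ always lies on its boundary. Throughout I abbreviate
\[
z_{jr}=\frac{x_{jr}\sigma^*_j(q)}{q_j\bar a_r},
\]
so that the inner sum in \eqref{LyaEps} is $\sum_{j\in r}(z_{jr}-z_{j^r_+ r})^2$, a sum of squared consecutive differences read along route $r$; by convention \eqref{Convension} the source value is $z_{j^r_0 r}=\bar a_{jr}/\bar a_r=1$. I work at any state with $q_j>0$ for the links involved, which is all that the drift computation in the proof of Theorem \ref{MainThrm2} requires.

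First I would record the conservation identity obtained by summing over the routes through a fixed link. Since $q_j=\sum_{r\ni j}x_{jr}$ and $\bar a_j=\sum_{r\ni j}\bar a_r$,
\[
\sum_{r\ni j}\bar a_r\,z_{jr}=\frac{\sigma^*_j(q)}{q_j}\sum_{r\ni j}x_{jr}=\sigma^*_j(q),
\qquad\text{so}\qquad
\sigma^*_j(q)-\bar a_j=\sum_{r\ni j}\bar a_r\,(z_{jr}-1).
\]
The decisive structural fact is that $\sigma^*(q)$ stays uniformly away from $\bar a$. For $q\neq0$ the objective $\sum_j q_j\log s_j$ has gradient $(q_j/s_j)_j\neq0$, so its maximiser cannot be interior and $\sigma^*(q)$ must lie on the boundary of $\coS$; since the ball $B(\bar a,\delta)$, where $\delta>0$ is the distance from $\bar a$ to the boundary of $\coS$, is contained in $\coS$, every boundary point obeys $\|\sigma^*(q)-\bar a\|\ge\delta$.

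Next I would transfer this separation onto the squared differences with two applications of Cauchy--Schwarz. Using $z_{j^r_0 r}=1$, each deviation telescopes, $z_{j^r_m r}-1=-\sum_{i=0}^{m-1}(z_{j^r_i r}-z_{j^r_{i+1}r})$, whence $(z_{jr}-1)^2\le k_r\sum_{i=0}^{k_r-1}(z_{j^r_i r}-z_{j^r_{i+1}r})^2=k_r\sum_{j\in r}(z_{jr}-z_{j^r_+ r})^2$ for every $j\in r$, where $k_r$ bounds the number of links on route $r$; note this uses only the source convention, not any value of $z$ at the end of a route. Combining this with the conservation identity and Cauchy--Schwarz applied to $\sum_{r\ni j}\bar a_r(z_{jr}-1)$ gives
\[
\delta^2\le\|\sigma^*(q)-\bar a\|^2=\sum_{j}\Big(\sum_{r\ni j}\bar a_r(z_{jr}-1)\Big)^2\le \bar a_{\max}\,k_{\max}^2\sum_{r\in\mR}\bar a_r\sum_{j\in r}(z_{jr}-z_{j^r_+ r})^2,
\]
with $\bar a_{\max}=\max_j\bar a_j$ and $k_{\max}=\max_r k_r$. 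The route-$r$ weight in \eqref{LyaEps} is $\bar a_r(\sigma_{max}/\bar a_r)^{-1}=\bar a_r^2/\sigma_{max}\ge(\min_r\bar a_r)\bar a_r/\sigma_{max}$, so the left-hand side of \eqref{LyaEps} is at least
\[
\frac{\min_r\bar a_r}{\sigma_{max}}\sum_{r\in\mR}\bar a_r\sum_{j\in r}(z_{jr}-z_{j^r_+ r})^2\;\ge\;\frac{(\min_r\bar a_r)\,\delta^2}{\sigma_{max}\,\bar a_{\max}\,k_{\max}^2}\;=:\;\epsilon_0>0 .
\]
Since $\epsilon_0$ is independent of $x$, taking $\epsilon=\epsilon_0/2$ proves the claim, and this $x$-independence is exactly what yields $dH/dt\le-\epsilon$ in Theorem \ref{MainThrm2}.

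The hard part is the separation $\|\sigma^*(q)-\bar a\|\ge\delta$ in the second step; this is where the hypothesis that $\bar a$ lies in the interior of $\coS$ does all the work, simultaneously giving $\delta>0$ and, through the first-order conditions for the strictly increasing objective, forcing $\sigma^*(q)$ onto the boundary and hence to a distance $\ge\delta$ from $\bar a$. Everything afterwards is routine Cauchy--Schwarz bookkeeping; the only point requiring a little care is to run the telescoping from the source $j^r_0$, so that the constant never depends on the (state-dependent) values of $z_{jr}$ themselves.
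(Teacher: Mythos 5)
Your proof is correct, and it rests on the same two pillars as the paper's own argument: the source convention $z_{j^r_0 r}=1$ (writing $z_{jr}=x_{jr}\sigma^*_j(q)/(q_j\bar a_r)$ as you do), which anchors the consecutive differences along each route, and the separation between $\bar a$ (interior of $\coS$) and $\sigma^*(q)$ (boundary of $\coS$), transferred to the $z$-variables through the flow identity $\sum_{r\ni j}\bar a_r z_{jr}=\sigma^*_j(q)$. The execution, however, is genuinely different. The paper argues through a single witness: Pareto optimality of $\sigma^*(q)$ gives one link $j$ with $\sigma^*_j(q)>\bar a_j+\delta$, hence one route through $j$ carrying an excess, hence one pair of consecutive links with a gap of order $\delta/(|r|^2\bar a_r)$, and its $\epsilon$ is the weighted square of that single gap. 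You argue in aggregate: the Euclidean separation $\|\sigma^*(q)-\bar a\|\ge\delta$, two applications of Cauchy--Schwarz, and telescoping from the source spread the separation over all routes simultaneously, yielding the uniform constant $(\min_r\bar a_r)\,\delta^2/(\sigma_{\max}\,\bar a_{\max}\,k_{\max}^2)$, with $k_{\max}$ the maximal route length and $\bar a_{\max}=\max_j\bar a_j$. Your version buys two improvements in rigor: (i) it needs only that $\sigma^*(q)$ lies on the topological boundary, which is immediate since the objective's gradient never vanishes for $q\neq 0$, whereas the paper's directional claim $\sigma^*_j(q)>\bar a_j+\delta$ invokes Pareto optimality, a property that is delicate when some $q_j=0$ (the proportionally fair maximizer need not be Pareto optimal in coordinates of zero weight); (ii) your constant is unambiguous, while the paper's $\epsilon=(\delta/|r|^2)^2\sigma_{\max}$ misplaces $\sigma_{\max}$ (it should divide, not multiply) and uses $|r|$ for two different quantities (first the number of routes through the witness link, then the length of the witness route). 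What the paper's witness argument buys in exchange is brevity and a transparent ``if the sum were zero we would reach a contradiction'' narrative. Both arguments share the same tacit assumptions --- $q_j>0$ on the links involved (you state this; the paper leaves it implicit), $\bar a_r>0$ for every route, and the convention that the sum over $j\in r$ in \eqref{LyaEps} includes the virtual source link $j^r_0$ --- so neither is more general on those points.
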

\begin{proof}
First let\rq{}s look what is required for the term on  the left-hand side of \eqref{LyaEps} to equal zero -- which, by the statement of the lemma, should give a contradiction. We would require that for each route $r\in\mR$, for all $j\in r$
\begin{equation}
 \frac{x_{jr}\sigma^*_{j}(q)}{q_{j}\bar{a}_r}-  \frac{x_{j^r_+ r}\sigma^*_{j^r_+}(q)}{q_{j^r_+}\bar{a}_r}=0.
\end{equation}
So the terms  $\frac{x_{jr}\sigma^*_{j}(q)}{q_{j}\bar{a}_r}$ are constant over $j\in r$, including $j_0$. But $\frac{x_{j^r_0r}\sigma^*_{j^r_0}(q)}{q_{j^r_0}\bar{a}_r}=1$. So we would require that, for all $r$ and $j\in r$
\begin{equation*}
\bar{a}_r = \frac{x_{jr}\sigma^*_{j}(q)}{q_{j}}.
\end{equation*}
Now summing over $r\in j$ we would require that
\begin{equation}
\bar{a}_j = \sum_{r: j\in r} \bar{a}_r = \sigma^*_{j}(q).
\end{equation}
However, we now have a contradiction: by assumption $(\bar{a}_j : j\in\mJ)$ does not belong to the boundary of the capacity set $\coS$ while $(\sigma^*_j(q): j\in\mJ)$ does (because it is Pareto optimal). So the above inequality cannot hold. So the  left-hand side of \eqref{LyaEps} cannot equal zero.

Now let\rq{}s use this last contradiction to extrapolate back and prove the lemma. Since $(\bar{a}_j:j\in\mJ)$ belongs to the interior of $\coS$ and $\sigma^*_j(q)\in \partial\!\! \coS$, there exist an $\delta>0$ such that for each vector $q\neq 0$ there is some $j\in\mJ$ such that
\begin{equation}
\bar{a}_j + \delta < \sigma^*_{j}(q).
\end{equation}
Thus, for the above inequality to hold, there exists some $r$ with $j\in r$ such that
\begin{equation}
\bar{a}_r + \frac{\delta}{|r|} < \frac{x_{jr}}{q_j}\sigma^*_{j}(q)
\end{equation}
or expressed differently,
\begin{equation}
\frac{x_{j^r_0r}\sigma^*_{j^r_0}(q)}{q_{j^r_0}\bar{a}_r} + \frac{\delta}{|r|\bar{a}_r} < \frac{x_{jr}\sigma^*_{j}(q)}{q_{j}\bar{a}_r}.
\end{equation}
This for this to hold between $j_0^r$ and $j$ on route $r$ there must be two links in the sequence $l$ and $l'$ on route $r$ between  $j_0^r$ and $j$ such that
\begin{equation}
 \frac{\delta}{|r|^2a_r} < \frac{x_{lr}\sigma^*_{l}(q)}{q_{l}\bar{a}_r} - \frac{x_{l'r}\sigma^*_{l'}(q)}{q_{l'}\bar{a}_r}.
\end{equation}
Setting $\epsilon = ( \frac{\delta}{|r|^2})^2 \sigma_{\max}$, we see that \eqref{LyaEps} must hold.
\end{proof}

\section{Conclusions and Future Work}
In this paper,  we have considered a generalized class of switch policies. We have shown that these share the same maximum stability properties as MaxWeight of policies. For the proportionally fair class of schedulers, we have extended these stability results to multihop networks with fixed routing. This is the first general proof of maximum stability for a multihop switch network where routing information is not required in order to make a scheduling decision. This is also significant given the greatly simplified queueing structure required to define the policy.

Future work might consider the implementation of our policies, which require the solution of a concave optimization problem. Thus, using for instance an interior point method, one can approximate a solution to this optimization in computational time that is polynomial in the number of constraints of the set of feasible schedules $\coS$. After this one can decompose the mean vector onto the set of schedules $\mS$. For instance, in the case of a Input Queued switch one could use a Birkhof von Neumann decomposition. Approaches of this type have been considered previously \cite{ShWaZh12}. Compared to MaxWeight, an advantage of the strict convexity of our optimization is that the solutions of our optimization move continuously with continuous changes in queue size. Thus one could consider an alternative, online convex optimization schema where one continuously adapts the current schedule in order to track the optimum $(\alpha,g)$-policy. As queue sizes increase the solution of our optimization will change more slowly and thus a tracking policy would expect to converge to the correct scheduling decision \cite{zinkevich2003online,hazan2007logarithmic}.

We have emphasized the structural benefits of implementing a Proportional Scheduler but have not thus far discussed the statistical benefits of the policy. As we mentioned in the introduction, there is a close relationship between proportional fairness and the maximum stability of reversible systems. With this, one can begin to show that the randomized queueing discipline considered in the Proportional Scheduler can be replaced by a FIFO service discipline and still maintain its stability properties. This is important from a practical perspective as most communications systems implement FIFO queueing. From a theoretical perspective, the result is far more involved and technical than the relatively straightforward proofs for the randomized service which fit within this paper. However, with this analysis further statistical benefits can be observed. For instance, delay is known to increase quadratically with route length for the BackPressure policy, see  \cite{BSS11,St11}. However, one can argue that for the Proportional Scheduler delay will grow linearly with route length.

As mentioned above, we have given a proof where a single-hop maximum stability policy implies stability of its routed multihop counterpart, and, crucially, these scheduling decision are made without discriminating between packet route-classes. It is reasonable to conjecture that this holds more generally than for just the Proportional Scheduler. The hope is that the arguments started here could give a general proof technique. One could then hope to extend recent, notable developments on single-hop networks \cite{MMAW99,St04,JiWa10,ShSh12} to multihop networks. Thus we begin to provide performance analysis of switched communication networks where packets are {communicated}.

\bibliographystyle{amsplain}
\bibliography{MW-AB}

\end{document}